\newif\ifextended
\newif\ifcomments
  \renewcommand*\chapterautorefname{\S\@gobble}
  \renewcommand*\sectionautorefname{\S\@gobble}
  \renewcommand*\subsectionautorefname{\S\@gobble}
\newcommand*{\myeqref}[2][]{%
  \hyperref[{#2}]{#1(\ref*{#2})}%
}
\newcommand*{\lineref}[1]{\hyperref[#1]{line~\ref*{#1}}}
\newcommand*{\linerangeref}[2]{\hyperref[#1]{lines~\ref*{#1}}\hyperref[#2]{-\ref*{#2}}}
\newcommand{\mypageref}[1]{\hyperref[#1]{page~\pageref*{#1}}}
\newcommand{\appendixref}[2]{\ifextended{\autoref{#1}}\else{the extended version~\cite[Section #2]{extendedversion}}\fi}
\declaretheorem[name=Definition,style=definition]{definition}
\declaretheorem[name=Theorem,sibling=definition]{theorem}
\definecolor{airforceblue}{rgb}{0.36, 0.54, 0.66}
\definecolor{brickred}{rgb}{0.9, 0.55, 0}
\definecolor{ao}{rgb}{0.0, 0.0, 1.0}
\definecolor{cobalt}{rgb}{0.0, 0.28, 0.67}
\definecolor{darkergreen}{rgb}{0,0.7,0.3}
\definecolor{magenta}{rgb}{1.0,0.0,1.0}
\providecommand*{\Dashv}{%
  \mathrel{%
    \mathpalette\@Dashv\vDash
  }%
}
\newcommand*{\@Dashv}[2]{%
  \reflectbox{$\m@th#1#2$}%
}
\def\arcr{\@arraycr}
\newcommand\ie{\emph{i.e.}, }
\newcommand\eg{\emph{e.g.}, }
\def\@parfont{\bfseries\itshape}
\newcommand{\theIVL}{CoreIVL}
\newcommand{\ViperInst}{ViperCore}
\tikzstyle{question} = [rectangle,
\tikzstyle{rule} = [rectangle, rounded corners,
\tikzstyle{arrow} = [thick,->,>=stealth]
\definecolor{light-gray}{gray}{0.9}
\definecolor{darkgreen}{rgb}{0.0,0.6,0.0}
\newcommand\code[1]{{\texttt{#1}}}
\newcommand\secref[1]{\autoref{#1}}
\newcommand\appref[1]{\PackageError{vipermacros}{use appendixref instead of appref}{}}
\newcommand\defref[1]{\hyperref[#1]{Def.~\ref*{#1}}}
\newcommand\notref[1]{\hyperref[#1]{Not.~\ref*{#1}}}
\newcommand\lstref[1]{\hyperref[#1]{List.~\ref*{#1}}}
\newcommand\tabref[1]{\hyperref[#1]{Tab.~\ref*{#1}}}
\newcommand\figref[1]{\hyperref[#1]{Fig.~\ref*{#1}}}
\newcommand\thmref[1]{\hyperref[#1]{Thm.~\ref*{#1}}}
\newcommand\lemref[1]{\hyperref[#1]{Lemma~\ref*{#1}}}
\newcommand\propref[1]{\hyperref[#1]{Prop.~\ref*{#1}}}
\newcommand{\wrt}{{{w.r.t.\@}}}
\newcommand{\cf}{{\it{cf.~}}}
\def\nocolour{ }
\newcommand{\soutifcolour}[1]{\ifdefined\nocolour{}\else{\sout{#1}}\fi}
\newcommand{\peter}[1]{\ifdefined\nocolour{#1}\else{\color{brickred}{#1}}\fi}
\newcommand{\thibault}[1]{\ifdefined\nocolour{{#1}}\else{\color{blue}{#1}}\fi}
\newcommand{\michael}[1]{\ifdefined\nocolour{{#1}}\else{\color{brown}{#1}}\fi}
\newcommand{\mout}[1]{\michael{\soutifcolour{#1}}}
\definecolor{gcolor}{rgb}{0.55, 0.71, 0.0}
\newcommand{\isabelle}{Isabelle/HOL}
\definecolor{darkred}{rgb}{0.55, 0.0, 0.0}
\newskip \point
\def \premisespacing{\quad}
\def \RulePremisesNewlineMore[#1]#2.#3#4{\@ifnextchar\bgroup{\RulePremisesNewlineMore[#1]{#2}.{#3\premisespacing#4}}{\@ifnextchar.{\RulePremisesNewline[#1]{{\begin{array}{c}#2\\#3\premisespacing#4\end{array}}}}{\RuleMultiPremise[#1]{{\begin{array}{c}#2\\#3\end{array}}}{#4}}}}
\def \RulePremisesNewline[#1]#2.#3{\@ifnextchar\bgroup{\RulePremisesNewlineMore[#1]{#2}.{#3}}{\@ifnextchar.{\RulePremisesNewline[#1]{{\begin{array}{c}#2\\#3\end{array}}}}{\RuleMultiPremise[#1]{#2}{#3}}}}
\def \RuleMultiPremise[#1]#2#3{\@ifnextchar\bgroup{\RuleMultiPremise[#1]{#2\premisespacing#3}}{\@ifnextchar.{\RulePremisesNewline[#1]{#2\premisespacing#3}}{\prooftree #2\justifies#3 \using{#1}\endprooftree}}}
\def \RuleWithName[#1]#2{\@ifnextchar\bgroup {\RuleMultiPremise[#1]{#2}}{\@ifnextchar.{\RulePremisesNewline[#1]{#2}}{\prooftree \justifies #2 \using{#1} \endprooftree}}}
\def \RuleWithInfo[#1]{\@ifnextchar[{\RuleWithNameAndCondition[#1]}{\RuleWithName[(#1)]}}
\def \RuleWithNameAndCondition[#1][#2]{\RuleWithName[(#1)^{#2}]}
\def \Inf{\proofrulebaseline=2ex \abovedisplayskip12\point\belowdisplayskip12\point \abovedisplayshortskip8\point\belowdisplayshortskip8\point \@ifnextchar[{\RuleWithInfo}{\RuleWithName[ ]}}
\newtheorem{lemma}{Lemma}
\newcommand{\cseq}{ \mathbf{;} \; }
\newcommand{\powerset}[1]{\mathbb{P}(#1)}
\newcommand{\namerule}[1]{\textit{#1}}
\newcommand{\freevars}[1]{\mathit{fv}(#1)}
\newcommand{\modvars}[1]{\mathit{mod}(#1)}
\definecolor{royalblue}{rgb}{0.25, 0.41, 0.88}
\definecolor{somebrown}{rgb}{0.8, 0.35, 0.1}
\newcommand*{\fpointsto}[3]{\ensuremath{{#1}\overset{\scriptscriptstyle{#3}}{\mapsto}{#2}}}
\newcommand*{\pointsto}[2]{\ensuremath{{#1} \mapsto {#2}}}
\newcommand{\sepkeyword}[1]{{\texttt{\bfseries #1}}}
\newcommand{\facc}[2]{\sepkeyword{acc}(#1, #2)}
\newcommand{\wacc}[1]{\sepkeyword{acc}(#1,\_)} 
\newcommand{\acc}[1]{\sepkeyword{acc}(#1)}
\newcommand{\mvcode}[1]{\text{\vcode{#1}}}
\newcommand{\mfecode}[1]{\text{\fecode{#1}}}
\newcommand{\redStmt}[3]{\ensuremath{\langle #1, #2 \rangle \rightarrow_\Delta #3}}
\newcommand{\axiomSem}[3]{\ensuremath{\Delta \vdash [#1] \; #2 \; [#3]}}
\newcommand{\opName}[1]{\ruleref{#1Op}}
\newcommand{\opNameFull}[1]{\ruleref{#1OpFull}}
\newcommand{\axName}[1]{\ruleref{#1Ax}}
\newcommand{\axNameFull}[1]{\ruleref{#1AxFull}}
\newcommand{\CSL}[3]{\ensuremath{\Delta \vdash_{\mathit{CSL}} [#1] \; #2 \; [#3]}}
\newcommand{\anySL}[3]{\ensuremath{\Delta \vdash_{\mathcal{L}} [#1] \; #2 \; [#3]}}
\newcommand{\relStable}[2]{\textsf{frames}(#1, #2)}
\newcommand{\stable}[1]{\textsf{stable}(#1)}
\newcommand{\stabilize}[1]{\textsf{stabilize}(#1)}
\newcommand{\stableNa}{\textsf{stable}}
\newcommand{\stabilizeNa}{\textsf{stabilize}}
\newcommand{\selfFraming}[1]{\textsf{selfFraming}(#1)}
\newcommand{\framedBy}[2]{\textsf{frames}(#1, #2)}
\newcommand{\substitution}[3]{#1[#2/#3]}
\newcommand{\core}[1]{|#1|}
\newcommand{\parcomp}{\ensuremath{\mid \mid}}
\newcommand{\translate}[1]{ \llbracket #1 \rrbracket }
\newcommand{\IDFAlgebra}{IDF algebra}
\newcommand{\Stabilize}[1]{\ensuremath{{\ll}#1{\gg}}}
\newcommand{\fst}[1]{\ensuremath{#1.1}}
\newcommand{\snd}[1]{\ensuremath{#1.2}}
\newcommand{\IVLcompile}[1]{{\downarrow}\ \!{#1}}
\newcommand{\IVLstate}{\omega}
\newcommand{\assertion}{A}
\newcommand{\stmt}{C}
\newcommand{\sexec}{\textsf{sexec}}
\newcommand{\sproduce}{\textsf{sproduce}}
\newcommand{\sconsume}{\textsf{sconsume}}
\newcommand{\sstablize}{\textsf{scleanup}}
\newcommand{\sexp}{\textsf{sexp}}
\newcommand{\THSem}{VCGSem} 
\newcommand{\THSemShort}{VCG} 
\newcommand{\THSemRed}[4]{\langle #2, #3 \rangle \rightarrow_{\mathsf{\THSemShort{}}} #4}
\newcommand{\THSemStmt}{\stmt}
\newcommand{\THSemCtx}{\Gamma}
\newcommand{\THSemState}{\sigma_{t}}
\newcommand{\THSemStateB}{\sigma'_{t}}
\newcommand{\THSemRes}{r}
\newcommand{\THSemFail}{\textsf{F}}
\newcommand{\THSemMagic}{\textsf{M}}
\newcommand{\THSemNormal}[1]{\textsf{N}(#1)}
\newcommand{\symrel}[2]{#1 \sim_{\mathsf{sym}} #2}
\newcommand{\vcgrel}[2]{#1 \sim_{\mathsf{\THSemShort{}}} #2}
\newcommand{\Inhale}[1]{\vinhale{#1}}
\newcommand{\Exhale}[1]{\vexhale{#1}}
\newcommand{\SynIf}[3]{\textsf{if } #1 \textsf{ then } #2 \textsf{ else } #3}
\newcommand{\symexpr}{t}
\newcommand{\SymExpr}{\mathsf{SymExpr}}
\newcommand{\SCond}{K}
\newcommand{\Variable}{\mathsf{Var}}
\newcommand{\SState}{\sigma}
\newcommand{\SymState}{\mathsf{SymState}}
\newcommand{\symstore}{\mathsf{store}}
\newcommand{\sympc}{\mathsf{pc}}
\newcommand{\symheap}{\mathsf{heap}}
\newcommand{\symproves}[2]{#1.\sympc \vdash #2}
\newcommand{\ListTyp}[1]{\mathsf{List}(#1)}
\newcommand{\Chunk}{\mathsf{Chunk}}
\newcommand{\chunk}{c}
\newcommand{\symaddcond}[2]{\textsf{pc\_add}(#1, #2)}
\newcommand{\symaddchunkH}{\textsf{chunk\_add}}
\newcommand{\symaddchunk}[2]{\symaddchunkH~#1~#2}
\newcommand{\symconsolidateH}{\mathsf{consolidate}}
\newcommand{\symconsolidate}[1]{\symconsolidateH~#1}
\newcommand{\symextractH}{\textsf{extract}}
\newcommand{\symextract}[4]{\symextractH~#1~#2~#3~#4}
\newcommand{\symvar}{x}
\newcommand{\literal}{l}
\newcommand{\unoperation}{\odot}
\newcommand{\binoperation}{\oplus}
\newcommand{\symfresh}{\mathsf{fresh}}
\newcommand{\chunkfield}{\mathsf{field}}
\newcommand{\chunkrecv}{\mathsf{recv}}
\newcommand{\chunkperm}{\mathsf{perm}}
\newcommand{\chunkval}{\mathsf{val}}
\newcommand{\fieldname}{f}
\newcommand{\FieldName}{\mathsf{FieldName}}
\newcommand{\accmath}[3]{\facc{#1.#2}{#3}}
\newcommand{\wildcardmath}{\_}
\begin{document}

\ifextended
\title{Formal Foundations for Translational Separation Logic Verifiers (extended version)}
\else
\title{Formal Foundations for Translational Separation Logic Verifiers}
\fi


\author{Thibault Dardinier}
\orcid{0000-0003-2719-4856}
\affiliation{%
  \institution{ETH Zurich}
  \department{Department of Computer Science}
  \city{Zurich}
  \country{Switzerland}
}
\email{thibault.dardinier@inf.ethz.ch}

\author{Michael Sammler}
\orcid{0000-0003-4591-743X}
\affiliation{%
  \institution{ETH Zurich}
  \department{Department of Computer Science}
  \city{Zurich}
  \country{Switzerland}
}
\email{michael.sammler@inf.ethz.ch}

\author{Gaurav Parthasarathy}
\orcid{0000-0002-1816-9256}
\affiliation{%
  \institution{ETH Zurich}
  \department{Department of Computer Science}
  \city{Zurich}
  \country{Switzerland}
}
\email{gaurav.parthasarathy@inf.ethz.ch}

\author{Alexander J. Summers}
\orcid{0000-0001-5554-9381}
\affiliation{%
  \institution{University of British Columbia}
  \city{Vancouver}
  \country{Canada}
}
\email{alex.summers@ubc.ca}

\author{Peter Müller}
\orcid{0000-0001-7001-2566}
\affiliation{%
  \institution{ETH Zurich}
  \department{Department of Computer Science}
  \city{Zurich}
  \country{Switzerland}
}
\email{peter.mueller@inf.ethz.ch}

\begin{abstract}
Program verification tools are often implemented as front-end translations of an input program into an intermediate verification language (IVL) such as Boogie, GIL, Viper, or Why3. The resulting IVL program is then verified using an existing back-end verifier. A soundness proof for such a \emph{translational verifier} needs to relate the input program and verification logic to the semantics of the IVL, which in turn needs to be connected with the verification logic implemented in the back-end verifiers. Performing such proofs is challenging due to the large semantic gap between the input and output programs and logics, especially for complex verification logics such as separation logic.

This paper presents a formal framework for reasoning about translational separation logic verifiers. At its center is a generic core IVL that captures the essence of different separation logics. We define its operational semantics and formally connect it to two different back-end verifiers, which use symbolic execution and verification condition generation, resp. Crucially, this semantics uses angelic non-determinism to enable the application of different proof search algorithms and heuristics in the back-end verifiers. An  axiomatic semantics for the core IVL simplifies reasoning about the front-end translation by performing essential proof steps once and for all in the equivalence proof with the operational semantics rather than for each concrete front-end translation.

We illustrate the usefulness of our formal framework by instantiating our core IVL with elements of Viper and connecting it to two Viper back-ends as well as a front-end for concurrent separation logic.
All our technical results have been formalized in Isabelle/HOL,
including the core IVL and its semantics, the semantics of two back-ends for a subset of Viper, and all proofs.

\end{abstract}

\begin{CCSXML}
  <ccs2012>
     <concept>
         <concept_id>10003752.10003790.10002990</concept_id>
         <concept_desc>Theory of computation~Logic and verification</concept_desc>
         <concept_significance>500</concept_significance>
         </concept>
     <concept>
         <concept_id>10003752.10003790.10003794</concept_id>
         <concept_desc>Theory of computation~Automated reasoning</concept_desc>
         <concept_significance>500</concept_significance>
         </concept>
     <concept>
         <concept_id>10003752.10003790.10011742</concept_id>
         <concept_desc>Theory of computation~Separation logic</concept_desc>
         <concept_significance>500</concept_significance>
         </concept>
     <concept>
         <concept_id>10003752.10010124.10010131.10010134</concept_id>
         <concept_desc>Theory of computation~Operational semantics</concept_desc>
         <concept_significance>500</concept_significance>
         </concept>
     <concept>
         <concept_id>10003752.10010124.10010131.10010135</concept_id>
         <concept_desc>Theory of computation~Axiomatic semantics</concept_desc>
         <concept_significance>500</concept_significance>
         </concept>
     <concept>
         <concept_id>10003752.10010124.10010138.10010142</concept_id>
         <concept_desc>Theory of computation~Program verification</concept_desc>
         <concept_significance>500</concept_significance>
         </concept>
   </ccs2012>
\end{CCSXML}

\ccsdesc[500]{Theory of computation~Logic and verification}
\ccsdesc[500]{Theory of computation~Automated reasoning}
\ccsdesc[500]{Theory of computation~Separation logic}
\ccsdesc[500]{Theory of computation~Operational semantics}
\ccsdesc[500]{Theory of computation~Axiomatic semantics}
\ccsdesc[500]{Theory of computation~Program verification}

\ifextended

\else
\keywords{deductive verifier, translational verifier, intermediate verification language, angelic non-determinism, implicit dynamic frames, Viper}  
\fi

\maketitle

\section{Introduction}
\label{sec:introduction}

Many program verification tools are organized into a \emph{front-end}, which encodes an input program along with its specification and verification logic into an intermediate verification language (IVL), and a \emph{back-end}, which computes proof obligations from the IVL program and discharges them, for instance, using an SMT solver. Examples of such \emph{translational verifiers} include
Civl~\cite{KraglQ21} and Dafny~\cite{Dafny} based on the Boogie IVL~\cite{Boogie},
Creusot~\cite{Creusot} and Frama-C~\cite{FramaC} based on Why3~\cite{Why3}, Gillian for C and JavaScript~\cite{GillianII} and Rust~\cite{GillianRust} based on GIL~\cite{GillianI}, as well as
Prusti~\cite{Prusti} and VerCors~\cite{VerCors} based on Viper~\cite{Viper}.

Developing a program verifier on top of an IVL has major engineering benefits. Most importantly, back-end verifiers, which often contain complex proof search algorithms, sophisticated optimizations, and functionality to communicate with solvers and to report errors, can be re-used across different verifiers, which reduces the effort of developing a program verifier dramatically.

On the other hand, formal reasoning about translational verifiers, in particular, proving their soundness, is more difficult than for  verifiers developed by embedding a program logic in an interactive theorem prover (such as Bedrock~\cite{Chlipala11}, VST~\cite{VST}, and RefinedC~\cite{RefinedC}).
Proving that a translational verifier is sound requires (1)~a formal semantics of the IVL as well as proofs that connect the IVL program (2)~to the verification back-end and (3)~to the input program.
While these steps have been studied for IVLs based on standard first-order logic~\cite{ParthasarathyMuellerSummers21, Cohen24, Herms13}, they pose additional challenges for IVLs that natively support more-complex widely-used reasoning principles such as those of separation logic \peter{(SL)}~\cite{Reynolds2002} (and variations such as implicit dynamic frames (IDF)~\cite{SmansJP12}). We focus on these IVLs, which are commonly-used and especially useful for building verifiers for heap-manipulating and concurrent programs.

\paragraph{Challenge~1: Defining the semantics of the IVL\@}
Standard programming languages and the intermediate languages used in compilers come with a notion of execution that can naturally be captured by an operational semantics. In contrast, IVLs are typically not designed to be executable, but instead to capture a wide range of verification problems and algorithms for solving them.

To capture different verification \emph{problems}, IVLs contain features that enable the encoding of a diverse set of input programs (\eg by offering generic operations suitable for encoding different concurrency primitives), specifications (\eg by offering rich assertion languages), and verification logics (\eg by supporting concepts such as framing). An IVL semantics must reflect this generality.
For instance, separation logic-based IVLs provide complex primitives for manipulating separation logic resources, which can be used to encode separation logic rules into the IVL\@.
As a result, these primitives can be used to encode a large variety of input program features including procedure calls, loops, and concurrency.

To capture different verification \emph{algorithms}, an IVL semantics must not prescribe \emph{how} to construct a proof and should instead abstract over different algorithms. Back-ends should have the freedom to apply various techniques to compute proof obligations (\eg symbolic execution or verification condition generation), to resolve trade-offs between completeness and automation (\eg by over-approximating proof obligations), and to discharge proof obligations (\eg{} instantiating existentially quantified variables in different ways). 
For instance, existing algorithms have different performance characteristics for different classes of verification problems~\cite{ViperAlgorithms}; an IVL semantics should provide the freedom to choose the best one for the problem at hand.
In practice, capturing different verification algorithms is important for verifiers with multiple back-ends for the same language (\eg{} based on either symbolic execution or verification condition generation). However, even a single back-end may offer a variety of different algorithms, which are chosen based on heuristics or configured by the user (\eg via command-line options). 
Moreover, back-ends along with their verification algorithms apply different algorithms over time, as their developers optimize existing verification algorithms or add support for new verification algorithms.

\paragraph{Challenge~2: Connecting the IVL to back-ends}

Soundness requires that successful verification of an IVL program by a back-end verifier implies the correctness of the IVL program. Since a back-end verifier's algorithm ultimately decides the outcome of a verification run, a soundness proof needs to formally connect the concrete verification algorithm to the IVL's semantics.
In particular, this soundness proof needs to consider the proof search algorithms and optimizations performed by a concrete verification back-end and show that they produce correct results according to the IVL semantics.
However, different back-ends typically use a diverse range of strategies to (for example) represent the program state, unroll recursive definitions, choose existentially-quantified permission amounts, and select the footprints of magic wands~\cite{DardinierPWMS22}.

\paragraph{Challenge~3: Connecting the IVL to front-ends}

Soundness also requires that the correctness of the IVL program implies the correctness of the \emph{input} program with respect to its intended verification logic. Such soundness proofs are difficult due to the large semantic gap between input and IVL programs. The two programs may use different reasoning concepts and proof rules, which need to be connected by a soundness proof.
This gap is particularly large for typical encodings into IVLs based on separation logic, because the verification logic for the source of this translation is typically different from the one for the IVL program, e.g., one of the vast wealth of concurrent separation logics.
For instance, a parallel composition of two threads in the input program is typically encoded as \emph{three sequential} IVL programs: two for the parallel branches, each of which is verified using a separate specification provided by the user, and one for the enclosing code, which composes the two specifications to encode the behavior of the parallel composition overall.
Such a translation of front-end proof rules into multiple sequential verification problems is not obvious; a soundness proof must bridge this gap.

\paragraph{Prior work}

Several works formalize aspects of translational verifiers with IVLs based on separation logic, but none of them addresses all three challenges outlined above.
For Viper, \citet{THSem} build a proof-producing version of Viper's verification condition generation back-end, but do not attempt to connect it to front-end languages nor give a general semantics for Viper that would also capture Viper's symbolic execution back-end. Similarly, \citet{GradualViper} formalize a version (only) of Viper's \emph{symbolic execution} back-end; their focus is on adapting it to gradual verification.
\citet{FeatherweightVerifast} show the soundness of the symbolic execution of VeriFast~\cite{VeriFast} \wrt{} an input C program.\footnote{VeriFast itself is not an IVL, but must address similar challenges to IVLs based on separation logic since VeriFast's symbolic execution is used to justify multiple front-end languages (C, Java, Rust) using separation logic reasoning; its symbolic execution also has strong similarities with IVL back-ends.}
However, VeriFast has only a single (symbolic execution) back-end that is used as the basis for multiple front-end languages (C, Java, Rust) and thus the formalization does not abstract over different verification algorithms.

\citet{GillianTR} briefly describe a soundness framework for GIL~\cite{GillianII}, a parametric program representation used by the Gillian project. GIL needs to be instantiated with a state model, primitive assertions, and memory actions to obtain specific intermediate representations (essentially, multiple IVLs) useful for different verification projects (e.g., for JavaScript~\cite{GillianII} and Rust~\cite{GillianRust}).
However, each GIL instantiation also determines the back-end verification algorithm. As such, there is no common semantics that abstracts over different verification algorithms.

\paragraph{This work}

In this paper, we present a framework for formally justifying translational separation logic verifiers. At its center is a generic IVL, called \theIVL{}, that captures the essence of different IVLs based on separation logics. In particular, \theIVL{} can be instantiated with different statements, assertion languages, and  separation algebras; we use a generalized notion of separation algebra that allows us to also model the implicit dynamic frames logic used in Viper.

To address Challenge~1 above, we define the semantics of \theIVL{} (and correspondingly, each of its instantiations) using \emph{dual (\ie demonic and  angelic) non-determinism}. Demonic non-determinism is a standard technique to verify properties for all inputs, thread schedules, etc. Our novel insight is to complement it with angelic non-determinism to abstract over the different proof search algorithms employed by back-ends. Intuitively, the IVL program is correct if \emph{any} of these algorithms succeeds, which is an angelic behavior.

To address Challenge~2, we define an operational semantics for \theIVL{}, which incorporates these notions of dual non-determinism and, like \theIVL{} itself, is parametric in the separation algebra to support both separation logic and IDF\@. An \emph{operational} semantics facilitates proving a formal connection to the concrete verification algorithms used in back-ends. Separation logic verifiers typically perform symbolic execution, which is typically described operationally~\cite{BoerB21} and (as we show) can be connected to our operational semantics via a standard simulation proof. Similarly, an operational IVL semantics is well-suited for formalizing the connections to back-ends that encode IVL programs into a further, more basic IVL, such as Viper's verification condition generator, which encodes Viper programs into Boogie.

To address Challenge~3, we define an axiomatic semantics for \theIVL{} and prove its equivalence to our operational semantics. An \emph{axiomatic} semantics facilitates proving a formal connection to the program logic used on the front-end level because both deal with derivations, which are often structurally related due to the compositional nature of most IVL translations. In addition, we are able to prove some powerful generic results about idiomatic encoding patterns once-and-for-all, further minimizing the instantiation-specific gap that a formal soundness proof needs to bridge.

We illustrate the practical applicability of our formal framework by instantiating \theIVL{} with elements of Viper. We use the resulting operational semantics to prove the soundness of two verification back-ends: a formalization of the central features of Viper's symbolic execution back-end, and a pre-existing formalization of Viper's verification condition generator~\cite{THSem}. These proofs demonstrate, in particular, that our use of angelic non-determinism allows us to capture these two rather disparate (and representative) back-ends. At the other end, we prove soundness of a front-end  based on concurrent separation logic using our axiomatic semantics. These proofs demonstrate that our framework effectively closes the large semantic gap between front-ends and back-ends and enables formal reasoning about the entire chain.

\paragraph{Contributions and outline}

We make the following technical contributions:

\begin{itemize}
\item We present a formal framework for reasoning about translational separation logic verifiers, via a parametric language \theIVL, for which we define a novel operational semantics combining core separation-logic reasoning principles and dual non-determinism.
We define an alternative axiomatic semantics, and show its equivalence with our operational semantics.

\item We define a Viper instantiation of \theIVL. We formalize and prove the soundness of the core of Viper's symbolic execution back-end. Similarly, we show soundness of an existing formalization of Viper's back-end based on verification condition generation. These proofs illustrate how angelic non-determinism can abstract over these different algorithmic choices.

\item We formalize a front-end for a simple concurrent language to be verified with concurrent separation logic, as well as its standard encoding as employed in translational verifiers, and prove this encoding sound with respect to our axiomatic semantics for \theIVL.
\end{itemize}

\noindent
We give an overview of our key ideas in \autoref{sec:key-ideas}. We define the operational and axiomatic IVL semantics in \autoref{sec:semantics}. We discuss how to prove back-end soundness in \autoref{sec:backends} and front-end soundness in \autoref{sec:frontends}. We discuss related work in
\autoref{sec:related-work} and conclude in \autoref{sec:conclusion}.

All formalizations and proofs in this paper are mechanized in the Isabelle proof assistant~\cite{Isabelle}
and our mechanization is publicly available~\cite{artifact}.
\section{Key Ideas}
\label{sec:key-ideas}

\definecolor{yellowboxcolor}{HTML}{ffffb3}
\colorlet{yellowbox}{yellowboxcolor!80}

\newcommand{\Cpq}{\ensuremath{C_V}}
\newcommand{\technicalSize}{\scriptsize}
\newcommand{\thmSize}{\footnotesize}
\newcommand{\refSize}{\footnotesize}

\begin{figure*}
  \centering
  \tikzstyle{every node}=[font=\small]
  \tikzstyle{block}=[inner sep=0.2em, draw, align=center, preaction={fill,white}]
  \pgfdeclarelayer{background}
  \pgfdeclarelayer{foreground}
  \pgfsetlayers{background,main,foreground}
  \begin{tikzpicture}
    \small

    \node (ivl) [block, fill=yellowbox] {Operational\\semantics};
    \path(ivl.west)+(-1.2, 0) node[anchor=east] (slproof) [block, fill=yellowbox] {Axiomatic\\semantics};
    \path[draw, <-, thick] (slproof) -- (ivl) node[pos=0.5, below] (abssemproof) {\thmSize \thmref{thm:op-to-ax-simple}};
    \path[draw, <-, thick] (ivl) to [out=160,in=20] node[midway, above] (abssemcomplete) {\thmSize \thmref{thm:completeness}} (slproof);


    \path(slproof.north)+(0, +0.2) node [] {\refSize \secref{subsec:ax-semantics}};
    \path(ivl.north)+(0, +0.2) node [] {\refSize \secref{subsec:op-semantics}};


    \path(slproof.west)+(-1.4, +0.8) node[anchor=east] (csl) [block, fill=yellowbox] {CSL};
    \path[draw, <-, thick] (csl.east) -- (slproof) node[pos=0.35, above, sloped] (cslproof) {\thmSize \thmref{thm:fe-soundness}};

    \path(csl.west)+(-1.3, -0) node[anchor=east] (parimp) [block, fill=yellowbox] {ParImp\\semantics};
    \path[draw, <-, thick] (parimp) -- (csl.west) node[pos=0.5, above] (cslproof) {\thmSize \thmref{thm:adequacy}};

    \path(parimp.north)+(0, +0.2) node [] {\refSize \secref{subsec:csl}};
    \path(csl.north)+(0, +0.2) node [] {\refSize \secref{subsec:csl}};

    \path(slproof.west)+(-1.4, -0.8) node[anchor=east] (other) [block, minimum width=20, minimum height=15, fill=yellowbox] {...};
    \path[draw, <-, thick, dashed] (other.east) -- (slproof);
    \path(other.west)+(-1.3, -0) node[anchor=east] (othersem) [block, fill=yellowbox, minimum width=20, minimum height=15] {...};
    \path[draw, <-, thick, dashed] (othersem.east) -- (other.west);


    \path(ivl.east)+(1.4, 1) node[anchor=west] (symexec) [block, fill=yellowbox] {Symbolic execution};
    \path(ivl.east)+(1.3, 0) node[anchor=west] (verifcond) [block, fill=yellowbox] {\THSem{}};
    \path(ivl.east)+(2, -1) node[anchor=west] (otherbackend) [block, fill=yellowbox, minimum width=25, minimum height=10] {...};
    \path[draw, ->, thick] (symexec.west) -- (ivl) node[pos=0.35, above, sloped] (symexecproof) {\thmSize \thmref{thm:sexec-sound}};
    \path[draw, ->, thick] (verifcond.west) -- (ivl) node[pos=0.45, above, sloped] (thmsemproof) {\thmSize \thmref{thm:thsem-sound}};
    \path[draw, ->, thick, dashed] (otherbackend.west) -- (ivl);

    \path(verifcond)+(1.5, 0) node[anchor=west] (boogie) [block, fill=yellowbox] {VCG};
    \path[draw, ->, thick] (boogie.west) -- (verifcond.east); 

    \path(symexec.north)+(0, +0.2) node [] {\refSize \secref{sec:symbolic-execution}};
    \path(verifcond.north)+(0, +0.2) node [] {\refSize \secref{sec:verif-cond-gener}};


    \path(parimp.north)+(1.5, 0.9) node (frontends) [] {\textbf{Front-ends} (\secref{sec:frontends})};
    \path let \p1 = (symexec) in let \p2 = (frontends) in
      (\x1, \y2)
      node (backends) [] {\textbf{Back-ends} (\secref{sec:backends})};
    \path($(frontends)!0.5!(backends)$) node (coreIVL) [align=center] {\textbf{CoreIVL} (\secref{sec:semantics})};
    \path (coreIVL)+(0, -0.4) node (vipercore) [] {ViperCore instantiation (\secref{sec:viper-inst})};

    \begin{pgfonlayer}{background}
      \path (slproof.north west)+(-0.3,0.9) node (abssemnw) {};
      \path (ivl.south east)+(+0.3,-0.7) node (abssemse) {};
      \path[rounded corners, draw=black!50, dashed, fill=black!10]
          (abssemnw) rectangle (abssemse);
    \end{pgfonlayer}

\end{tikzpicture}
  \caption{Overview of our framework and its application to Viper.
  The yellow boxes represent components of our framework (such as semantics and logics), while the arrows show the theorems that connect them.   The dashed arrows and the unlabeled yellow boxes represent potential additional front-ends and back-ends that could be connected to ViperCore.  CSL stands for concurrent separation logic, and VCG for verification condition generation.
  VCG has been formally connected to VCGSem by \citet{THSem}.
}
  \label{fig:overview}
\end{figure*}

In this section, we present the key ideas behind our work.
Our framework and its instantiation to Viper is presented in \figref{fig:overview}.
At its center is \emph{\theIVL{}} (depicted by the grey area), a general core language for representing SL-based IVLs. This core language
bridges the substantial gap between proofs of high-level programs using custom verification logics (\eg concurrent separation logic~\cite{CSL} (CSL) in the figure) at the front-end level and verification algorithms for SL-based IVLs at the back-end level (\eg symbolic execution and verification condition generation).
\theIVL{} is \emph{parametric} in its state model and assertions, so that it can represent multiple variants of separation logic (\eg those on which VeriFast and GIL are based), including implicit dynamic frames (on which Viper is based).
In \figref{fig:overview}, \ViperInst{} represents the instantiation of these parameters for Viper.
We give two equivalent semantics to \theIVL{}:
An \emph{operational} semantics, which is designed to enable soundness proofs for diverse back-end verification algorithms (shown on the right of \figref{fig:overview}),
and an \emph{axiomatic} semantics, which can be used to prove front-end translations into \theIVL{} sound, by connecting this axiomatic semantics to the front-end separation logic (shown on the left).

The rest of this section is organized as follows.
\secref{subsec:core-ivl} introduces the general core language \emph{\theIVL{}} for representing SL-based IVLs.
\secref{subsec:running-example} illustrates how to check for the existence of a Concurrent Separation Logic front-end proof for a parallel program by encoding the verification problem into our sequential \theIVL{}, mimicking the approach of modern translational verifiers.
\secref{subsec:op-semantics-key} presents the formal operational semantics of \theIVL{}.
Finally, \secref{subsec:ax-semantics-key} presents an alternative equivalent \emph{axiomatic semantics} for \theIVL{}, and shows how it can be leveraged to prove a front-end translation sound.

\subsection{A Core Language for SL-Based IVLs}
\label{subsec:core-ivl}

In this section, we first motivate and then define a core language for SL-based IVLs, called \emph{\theIVL{}}, which captures central aspects of SL-based verifiers, such as Viper~\cite{Viper}, Gillian~\cite{GillianI, GillianII}, or VeriFast~\cite{VeriFast}.

\paragraph{Manipulating SL states via inhale and exhale}
At the core of these verifiers is the SL state they track throughout the verification, typically containing a heap (a mapping from heap locations to values) and SL resources (such as fractional permissions to heap locations).
This SL state is manipulated with two verification primitives: $\vinhale{A}$ (also called \emph{assume*} and \emph{produce}) and $\vexhale{A}$ (also called \emph{assert*} and \emph{consume}), where $A$ is a separation logic assertion.
$\vinhale{A}$ assumes the logical constraints in $A$ (e.g.,  constraints on integer values), and adds the resources (e.g., ownership of heap locations) specified by $A$ to the current state.
Dually, $\vexhale{A}$ asserts that the logical constraints in $A$ hold, and removes the resources specified by $A$ from the current state.
These two primitives can encode the verification conditions for a wide variety of program constructs.
For instance, a procedure call is encoded as exhaling the call's precondition (to check its logical constraints and transfer ownership of resources from caller to callee), followed by inhaling the postcondition (to assume logical constraints and gain resources back from the call).

\paragraph{Diversity of logics and their semantics}
While SL-based IVLs all employ some version of these two inhale and exhale primitives, their actual logics are surprisingly diverse in both core connectives and their semantics. GIL and VeriFast support different separation logics, while Viper uses implicit dynamic frames (IDF), a variation of separation logic that allows for heap-dependent expressions in assertions (e.g., separation logic's points-to predicate $\pointsto{e.f}{v}$ is expressed as $\acc{e.f} * e.f = v$ in IDF, in which the ownership of the heap location and a logical constraint on its value are expressed as two separate conjuncts)\footnote{This difference also affects the semantic models; separation logic is typically formalized using  partial heaps, whereas IDF typically uses a total heaps model~\cite{ParkinsonSummers12}.}.

IVLs also support different SL connectives: Viper supports iterated separating conjunctions~\cite{ViperQPs},
Viper and Gillian support magic wands~\cite{DardinierPWMS22,SchwerhoffS15},
Viper and VeriFast support fractional recursively-defined predicates~\cite{Boyland03,DardinierMS22}, and
VeriFast supports arbitrary existential quantification.

A standard approach for generic reasoning over large classes of separation logics is to build reasoning principles based on a \emph{separation algebra} (built over a partial commutative monoid)~\cite{Calcagno2007,Dockins2009}. We extend this classic concept to a novel notion of \emph{\IDFAlgebra{}}, which can model separation logics and IDF alike. In particular,  \IDFAlgebra{}s allow asserting knowledge about the value of heap locations $e.f$ without asserting ownership of the heap location itself.

\begin{figure}
  $$
  C \Coloneqq \;
  \vinhale{A} \mid \vexhale{A} \mid
  \vhavoc{x} \mid
  C; C \mid \vif{b}{C}{C} \mid \vassign{x}{e} \mid \vskipp \mid \vcustom{C'}
  $$
\caption{Syntax of statements in \theIVL{}.
$A$ is an assertion, $x$ a variable, $b$ a Boolean expression, $e$ an arbitrary expression.
Assertions and expressions are represented semantically as  sets of states and partial functions from states to values, respectively.
$C'$ represents custom statements and is a parameter of the language.
}
\label{fig:viper-language}
\end{figure}

\paragraph{Core Language}
The syntax of \theIVL{} is shown in \figref{fig:viper-language}.
To capture the diversity of assertions supported in existing SL-based IVLs, assertions $A$ in our core language are \emph{semantic},
\ie assertions are \emph{sets of states} (as opposed to fixing a syntax, and having the semantics for this syntax determine the set of states in which a syntactic assertion is true); states themselves are taken from any chosen \IDFAlgebra{}.
Similarly, expressions $e$ are semantically represented as partial functions from states to values.
Moreover, although we assume some core statements in our language, we allow these to be arbitrarily extended via a parameter for \emph{custom statements} $C'$, for instance, to add field assignments. The statements of our core language contain the key verification primitives \vinhaleNa{} and \vexhaleNa{} described above, as well as
\vhavocNa{}, which non-deterministically assigns a value to a variable.
Combined with conditional branching, \vinhaleNa{}, \vexhaleNa{}, and \vhavocNa{} allow us to encode many important statements, such as while loops, procedure calls, and even proof rules for parallel programs, as we show in the next subsection.

\subsection{Background: Translational Verification of a Parallel Program}
\label{subsec:running-example}

\begin{figure}
 \small
\begin{minipage}{0.4\textwidth}
\begin{frontend}[mathescape]
method main(p: Cell)
  // requires acc(p.v, _)
{
  q := alloc(0)

  // ${\color{darkgreen} \{P_l\}}$         ${\color{darkgreen}\{P_r\}}$
  q.v := p.v || tmp := p.v
  // ${\color{darkgreen} \{Q_l\}}$         ${\color{darkgreen}\{Q_r\}}$

  tmp := tmp + q.v

  free(q)

  assert tmp = p.v + p.v
}
\end{frontend}
\end{minipage}
\begin{minipage}{0.35\textwidth}
\begin{viper}[mathescape]
method main_ivl(p: Ref) {
  inhale acc(p.v, _)

  havoc q
  inhale acc(q.v) * q.v = 0

  exhale $P_l$ * $P_r$
  havoc tmp
  inhale $Q_l$ * $Q_r$

  tmp := tmp + q.v

  exhale acc(q.v)

  exhale tmp = p.v + p.v
}
\end{viper}
\end{minipage}
\begin{minipage}{0.25\textwidth}
\begin{viper}
method l(p,q:Ref){
  inhale $P_l$
  q.v := p.v
  exhale $Q_l$
}

method r(p,q:Ref){
  inhale $P_r$
  tmp := p.v
  exhale $Q_r$
}
\end{viper}
\end{minipage}
\caption{A simple parallel program (left),
annotated with a method precondition, as well as pre- and postconditions for the parallel branches, and its encoding into \theIVL{} (instantiated to model Viper),
consisting of a main IVL method (middle)
and two further methods (right) modeling the parallel branches (that is, the premises of CSL's parallel composition rule). We use the shorthands
$P_l \triangleq \wacc{p.v} * \acc{q.v}$,
$Q_l \triangleq \wacc{p.v} * \acc{q.v} * \code{p.v} = \code{q.v}$,
$P_r \triangleq \wacc{p.v}$, and
$Q_r \triangleq \wacc{p.v} * \code{tmp} = \code{p.v}$,
where the IDF assertion $\wacc{e}$ expresses non-zero permission to $e$ (corresponding to the SL assertion $\exists p, v \ldotp \fpointsto{e}{v}{p}$).
}
\label{fig:running-example}
\end{figure}

We use the parallel program on the left in \figref{fig:running-example} to illustrate how translational verification works, and the challenges that arise in formalizing this widely-used approach.
This program takes as input a Cell \code{p} (an object with a value field \code{v}), allocates a new Cell \code{q}, assigns
the value of \code{p.v} in parallel to \code{q.v} and to the variable \code{tmp}, then adds the value of \code{q.v} to \code{tmp},
deallocates \code{q}, and finally asserts that \code{tmp} is equal to \code{p.v + p.v}.
Our goal is to verify this program in Concurrent Separation Logic (CSL)~\cite{CSL}, that is, by encoding the program and the proof rules of CSL into \theIVL{}. In particular, we want to prove that the assertion on its last line holds.

Although the original CSL is presented via standard separation logic syntax, we use the syntax of IDF to annotate this example. The syntax \facc{\code{e.v}}{f} denotes \emph{fractional permission} (ownership)  of the heap location \code{e.v} (where $f = 1$ allows reading and writing, and a fraction $0 < f < 1$ allows reading) \cite{Boyland03}. The syntax $\wacc{\code{p.v}}$ (used as precondition in our example) denotes a so-called \emph{wildcard permission} (or \emph{wildcard} in short); it is shorthand for $\exists f > 0 \ldotp \facc{\code{p.v}}{f}$, which guarantees read access while abstracting the precise fraction.

Correctness of our example means proving a CSL triple $\CSL{\wacc{\code{p.v}}}{C}{\top}$,
where $C$ is the body of the method \fecode{main} in the front-end (left) program ($\top$ is the trivial postcondition). Instead of constructing a proof directly, a translational verifier maps this to an IVL program (shown as a \theIVL{} program to the middle and right of \figref{fig:running-example}) whose correctness implies the existence of a CSL proof for the original program.

\paragraph{Encoding the program into \theIVL{}}

Our encoding models each proof task of the CSL verification problem as a separate IVL method, whose statements reflect the individual proof steps \cite{LeinoM09}. The IVL methods \code{main\_ivl}, \code{l} and \code{r} are constructed such that the correctness of \emph{all three} implies the existence of a valid CSL proof for \code{main}.

The precondition $\wacc{\code{p.v}}$ of \code{main} is modeled by the first \vinhaleNa{} statement in \code{main\_ivl}, reflecting that the proof of the main method may rely on the resources and assumptions guaranteed by this precondition.
The allocation \fecode{q := alloc(0)} is then encoded via a  \vhavocNa{} and an \vinhaleNa{} statement to non-deterministically choose a memory location and obtain a full (\ie $1$) permission.
Dually, the deallocation \fecode{free(q)} after the parallel composition is encoded via an \vexhaleNa{} statement, which removes this (full) permission from the IVL state. Since permissions are non-duplicable (technically, affine) resources, this encoding guarantees that no permission can remain and so any attempt to later access this location would cause a verification failure.

To understand the encoding of a source-level parallel composition, we recall the CSL proof rule\footnote{We omit technical side-conditions from the original rule that restrict mutation of variables shared amongst threads; these are taken care of properly in real verifiers and our formalizations.}:
\begin{mathpar}
  \inferHL{Par}
  {\CSL{P_l}{C_l}{Q_l} \\ \CSL{P_r}{C_r}{Q_r}}
  {\CSL{P_l * P_r}{C_l \; || \; C_r}{Q_l * Q_r}}
\end{mathpar}
From the point of view of the \emph{outer thread} (forking and joining the parallel branches), the overall effect of the parallel composition can be seen as \emph{giving up} the separating conjunction ${P_l * P_r}$ of the preconditions of the parallel branches, and obtaining the corresponding postconditions ${Q_l * Q_r}$ before resuming any remaining code\footnote{We assume (as is common for modular verifiers) that each thread's specification is explicitly annotated, as in \code{main}.}. This exchange of assertions across the triple in the conclusion of the rule (as well as the intervening modification of \code{tmp}) is modeled in the IVL program by the sequence $\vexhale{P_l*P_r};\vhavoc{\code{tmp}};\vinhale{Q_l*Q_r}$.

The premises of the parallel rule are checked by verifying two extra methods \code{l} and \code{r}, whose pre- and postconditions correspond to the Hoare triples from the rule premises directly.
The encoded bodies of \code{l} and \code{r} follow the standard pattern: an \vinhaleNa{} of their preconditions (which can be seen as the other ``half'' of the transfer from the outer thread, modeled by $\vexhale{P_l*P_r}$), the translation of their source implementations, and finally an \vexhaleNa{} of their postconditions.

If running a back-end verifier for the IVL on the three encoded methods succeeds, we have demonstrated that a CSL proof for the original program exists---provided that the translational verification is sound. Soundness depends on a non-trivial translation, the subtle semantics of an IVL, and the algorithms employed by back-end verifiers. In the rest of this section, we explain our formal framework for establishing the soundness of translational verifiers.

\subsection{Operational Semantics and Back-End Verifiers}
\label{subsec:op-semantics-key}

To make formal claims about an IVL program,  we need a formal  semantics and notion of correctness for the IVL itself. As explained in the introduction, an operational semantics facilitates a formal connection to various back-end algorithms, which typically have an operational flavor. Since our semantics needs to capture verification algorithms that make heavy use of (demonic) non-determinism (to model concurrency, allocation, or abstract modularly over the precise behavior of program elements), our operational semantics embraces such non-determinism. Moreover, to account for the diversity of the verification algorithms used in back-ends, our semantics also incorporates the dual notion of \emph{angelic non-determinism}.

Consider verifying the statement
$\vexhale{\acc{\code{a.v}} \lor \acc{\code{b.v}}}$, which requires giving up (full) permission to \emph{either} \code{a.v} or \code{b.v}; if the original state holds both permissions, either choice avoids a failure here, but results in different successor states, and so might affect whether subsequent statements verify successfully. Such algorithmic choices occur for other IVL constructs, such as for choosing the values of existentials (including the amount of permission for a wildcard permission), or determining the footprints of magic wands. Our operational semantics makes \emph{all algorithmic choices possible} and defines a program as correct if \emph{any} such choice avoids failure.

\begin{figure}
\footnotesize
\begin{subfigure}[t]{\textwidth}
\begin{mathpar}
\inferH{InhaleOp}{}
{\redStmt{\vinhale{A}}{\omega}{ \{ \omega \} * A }}
\and
%
\inferH{ExhaleOp}
{\omega = \omega' \oplus \omega_A \\
\omega_A \in A
}
{\redStmt{\vexhale{A}}{\omega}{ \{ \omega' \}}}
\and
%
%
%
\inferH{SeqOp}
{\redStmt{C_1}{\omega}{S_1} \\
\forall \omega_1 \in S_1 \ldotp \redStmt{\thibault{C_2}}{\thibault{\omega_1}}{\thibault{\mathcal{S}}(\omega_1)} }
{\redStmt{C_1;C_2}{\omega}{ \cup_{\omega_1 \in S_1} \thibault{\mathcal{S}}(\omega_1)}}
\end{mathpar}
\caption{Selected operational semantics rules.}
\label{fig:op-semantics-small}
\end{subfigure}
\begin{subfigure}[t]{\textwidth}
\begin{mathpar}
\inferH{InhaleAx}{}
{\axiomSem{P}{\vinhale{A}}{P * A }}
\and
%
\inferH{ExhaleAx}
{P \models Q * A
}
{\axiomSem{P}{\vexhale{A}}{Q}}
\and
%
%
%
\inferH{SeqAx}
{\axiomSem{P}{C_1}{R} \\
\axiomSem{R}{C_2}{Q}}
{\axiomSem{P}{C_1;C_2}{Q}}
\end{mathpar}
\caption{Selected axiomatic semantic rules.}
\label{fig:ax-semantics-small}
\end{subfigure}
\caption{Selected simplified operational and axiomatic semantic rules.}
\label{fig:semantics-small}
\end{figure}

\paragraph{Operational semantics.}
To capture the dual non-determinism, we define our operational semantics as a multi-relation~\cite{BinaryMultirelations, Melocoton}
$$
\redStmt{C}{\omega}{S}
$$
where $C$ is an IVL statement, $\omega$ an initial state, $S$ a set of final states, and $\Delta$ a type context (mapping for example variables to types, \ie to sets of values).
The set $S$ captures the \emph{demonic choices}, \ie contains the resulting state for each possible demonic choice. On the other hand, \emph{angelic choices} are reflected by \emph{different result sets} derivable in our semantics. Returning to our previous example, if $\omega$ is a state with full permission to both \code{a.v} and \code{b.v}, our semantics allows for both transitions
$\redStmt{\vexhale{\acc{\code{a.v}} \lor \acc{\code{b.v}}}}{\omega}{\{ \omega_{-a} \}}$ and $\redStmt{\vexhale{\acc{\code{a.v}} \lor \acc{\code{b.v}}}}{\omega}{\{ \omega_{-b} \}}$ (where $\omega_{-a}$ and $\omega_{-b}$ are identical to $\omega$ but with the permission to \code{a.v} resp. \code{b.v} removed).

A successful verification by a back-end is represented by an execution in our operational semantics, leading to the following definition of correctness of a \theIVL{} statement:
\begin{definition}\label{def:correct}
  A \theIVL{} statement $C$ is \emph{correct} for a well-formed initial state $\omega$ iff $C$ executes successfully in $\omega$,
  \ie $\exists S \ldotp \redStmt{C}{\omega}{S}$.
$C$ is \emph{valid} iff it is correct for all well-formed initial states.
\end{definition}

\figref{fig:op-semantics-small} shows simplified rules for the operational semantics of $\vinhale{A}$, $\vexhale{A}$, and sequential composition. The (non-simplified) rules for all statements are shown in \secref{sec:semantics}.
Inhaling A in state $\omega$ leads to the set of all possible combinations $\omega \oplus \omega_A$ for $\omega_A \in A$, capturing the demonic non-determinism of \vcode{inhale}: All possible states satisfying $A$ must be considered in the rest of the program.
Dually, the rule \opName{Exhale} allows \emph{any choice of state} $\omega_A$ satisfying $A$ (that is, uses angelic non-determinism), and to remove it from $\omega$.
In our previous example, $\omega$ can be decomposed into
$\omega = \omega_{-a} \oplus \omega_a$ or $\omega = \omega_{-b} \oplus \omega_b$,
where $\omega_a$ and $\omega_b$ respectively contain the permission to \code{a.v} and \code{b.v} (and thus $\omega_a$ and $\omega_b$
both satisfy the exhaled assertion $\acc{\code{a.v}} \lor \acc{\code{b.v}}$).
The rule \opName{Seq} for sequential composition is more involved, since it needs to deal with the dual non-determinism:
It requires a single function $\mathcal{S}$ that maps every state $\omega_1$ from $S_1$ (the set of states obtained after executing $C_1$ in $\omega$)
to a set of states $\mathcal{S}(\omega_1)$ that can be reached by executing $C_2$ in $\omega_1$.
The choice of the function $\mathcal{S}$ captures the angelic choices in $C_2$.

\paragraph{Connection to back-end verifiers}
To show that this operational semantics for \theIVL{} is indeed suitable to capture different verification algorithms, we connect it to formalizations of the two main back-ends used by Viper.
First, we formalize a version of Viper's symbolic execution back-end~\cite{MalteThesis} in \isabelle{} and prove it sound against the operational semantics of \theIVL{}.
Second, we connect the formalization of Viper's verification condition generation back-end by \citet{THSem} to \theIVL{} by constructing a \theIVL{} execution from a successful verification by their back-end.
The soundness proofs of these back-ends are described in \secref{sec:backends}.
There we will also see that the angelic choice described earlier in this section is crucial for enabling these proofs since the two back-ends use different heuristics, in particular around exhaling wildcard permissions.

\subsection{Axiomatic Semantics}
\label{subsec:ax-semantics-key}

The previously-introduced definition of correctness (\defref{def:correct}) based on the operational semantics is well-suited to connect to back-end verifiers. However, connecting it to front-end programs, and especially logics such as CSL in our example from \figref{fig:running-example}, requires substantial effort due to the large semantic gap between the operational IVL semantics and the front-end logic.
The IVL semantics presented previously is \emph{operational}, describes the execution from a \emph{single state}, and exposes low-level details (such as handling the dual non-determinism in the rule \opName{Seq}).
In contrast, the program logic is \emph{axiomatic}, describes the behavior of \emph{sets of states} (via assertions), and is more high-level (\eg it uses an intermediate assertion in the rule \axName{Seq} instead of the semantic function $\mathcal{S}$).
To bridge this gap,
we present an alternative (and, as we later prove, equivalent) axiomatic semantics for \theIVL{},
which is closer to the separation logics typically used for front-end programs and, thus, simplify the proof that a front-end translation is sound.

Our axiomatic semantics uses judgments of the form
$$\axiomSem{P}{C}{Q}$$
where $P$ and $Q$ are semantic assertions (sets of states),
$C$ is an IVL statement, and $\Delta$ is a type context.
Intuitively, this triple expresses that $C$ can be executed successfully in any state from $P$ (with the right angelic choices),
and $Q$ is (precisely) the set of all states reached by these executions.
Formally, we want the following \emph{soundness} property (we will present the completeness theorem in \autoref{sec:semantics}):
\begin{theorem}[Operational-to-Axiomatic Soundness.]\label{thm:op-to-ax-simple}
If the \theIVL{} statement $C$ is well-typed and valid (\defref{def:correct})
  then there exists a set of states $B$ such that $\axiomSem{\top}{C}{B}$ holds.
\end{theorem}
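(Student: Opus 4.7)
The plan is to strengthen the statement to a form amenable to structural induction. Since the axiomatic judgment $\axiomSem{P}{C}{B}$ speaks about a whole set of initial states while Def.~\ref{def:correct} is phrased pointwise, I would first prove the following stronger lemma: for every set $P$ of well-formed states, if for each $\omega \in P$ there exists $S_\omega$ with $\redStmt{C}{\omega}{S_\omega}$, then there exists $B$ with $\axiomSem{P}{C}{B}$. Theorem~\ref{thm:op-to-ax-simple} is then the instance $P = \top$ (the set of all well-formed states).

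The proof of this stronger lemma proceeds by structural induction on $C$. The primitive cases are direct: for $\cskip$, assignments, $\vhavocNa{}$, and $\vinhaleNa{}$, inverting the operational rule and aggregating the per-$\omega$ results gives the ``obvious'' image set as $B$ (e.g.\ $B = P * A$ for $\vinhale{A}$, matching \axName{InhaleAx}). For $\vexhale{A}$, the operational premise provides, for each $\omega \in P$, a decomposition $\omega = \omega' \oplus \omega_A$ with $\omega_A \in A$; by (the axiom of) choice one extracts a function picking such a decomposition uniformly, and $B$ is taken to be the set of ``left halves'' $\omega'$, which yields the entailment $P \models B * A$ required by \axName{ExhaleAx}. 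Conditionals are handled by applying the induction hypothesis to each branch restricted to the corresponding subset of $P$, and then joining.

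The main obstacle is sequential composition $C_1;C_2$. Inverting \opName{SeqOp} supplies, for each $\omega \in P$, an intermediate set $S_1^\omega$ and a continuation function $\mathcal{S}^\omega$ defined on $S_1^\omega$. Applying the induction hypothesis to $C_1$ on precondition $P$ produces some $R$ with $\axiomSem{P}{C_1}{R}$, but to invoke the induction hypothesis on $C_2$ I must know that every $\omega_1 \in R$ admits a $C_2$-execution. Two wrinkles arise: (i) $R$ might over-approximate the operationally reachable intermediates, and (ii) a single $\omega_1$ may be reached from several different $\omega \in P$ with differing continuations $\mathcal{S}^\omega(\omega_1)$. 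I expect to dispatch (i) by arranging for the inductive construction on $C_1$ to produce precisely the union $\bigcup_{\omega \in P} S_1^\omega$ (or by relying on a rule of consequence/weakening for the axiomatic semantics, if available, so that we can pick $R$ within the reachable set), and (ii) by a second appeal to choice to fix one continuation per intermediate state. A small secondary lemma, proved in parallel, asserts that any $B$ witnessing $\axiomSem{P}{C_1}{R}$ arising from our construction consists entirely of operationally reachable states, so the required premise for the induction hypothesis on $C_2$ is met.

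With the strengthened lemma in hand, the theorem follows immediately: validity of $C$ supplies the pointwise premise for $P = \{\omega \mid \omega \text{ well-formed}\}$, and the resulting $B$ is the desired postcondition of $\axiomSem{\top}{C}{B}$. I anticipate the delicate points to be entirely in the Seq case: managing the interaction between the angelic choices (which are internalized into the operational result set) and the demonic branching (which forces us to pick continuations for \emph{every} branch), and ensuring that the induction hypothesis on $C_1$ can be invoked with a sufficiently tight intermediate assertion so that the induction hypothesis on $C_2$ applies.
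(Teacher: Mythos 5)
Your top-level plan matches the paper's: generalize from a single initial state to a set of initial states, prove a strengthened lemma by structural induction on $C$, and recover \thmref{thm:op-to-ax-simple} by instantiating the precondition with $\top$. The gap is in the sequential-composition case, which is precisely the point the paper's proof (\lemref{lem:op-to-ax-ind}) is engineered around. First, your fallback of a consequence/weakening rule is unavailable: the axiomatic semantics deliberately contains exactly one rule per statement and no structural rules, so \axNameFull{Seq} requires the postcondition produced for $C_1$ and the precondition fed to the induction hypothesis for $C_2$ to be literally the same assertion $R$. This forces the inductive statement to pin down the postcondition exactly (the paper makes it $\Stabilize{\bigcup_{(l,\omega)}\mathcal{S}(l,\omega)}$), rather than merely asserting the existence of ``some $B$ consisting of reachable states''. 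Second, once the postcondition is the exact image, your resolution of the collision problem --- ``a second appeal to choice to fix one continuation per intermediate state'' --- breaks the induction rather than completing it: if the same intermediate state $\omega'$ is reached from two initial states $\omega_A$ and $\omega_B$ whose executions resolve the angelic choices in $C_2$ differently, say $\redStmt{C_2}{\omega'}{\{\omega'_A\}}$ in one execution and $\redStmt{C_2}{\omega'}{\{\omega'_B\}}$ in the other, then fixing one continuation yields a postcondition containing only one of $\omega'_A,\omega'_B$, whereas the statement you must prove for $C_1;C_2$ demands the union of both. So in either of your two variants (exact union, or existential $B$ plus a reachability side lemma) the Seq case does not go through as set up.

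The paper avoids the collapse altogether by enriching the data: $\Omega$ consists of pairs of a \emph{list of past states} and a current state, and the continuation function is defined on these pairs, so two executions arriving at the same intermediate state via different initial states stay distinguishable ($\mathcal{S}_2([\omega_A],\omega') = \{\omega'_A\}$ and $\mathcal{S}_2([\omega_B],\omega') = \{\omega'_B\}$), and the required union is obtained without discarding any continuation. Your route could conceivably be repaired without histories, but only by restating the lemma so that the postcondition is the $\Stabilize{\cdot}$-closure of the image of a \emph{re-chosen} family of operational executions (allowing the overall result set to shrink) and by re-exhibiting, in the Seq case, that the collapsed continuations reassemble into a valid execution of $C_1;C_2$ from each initial state --- bookkeeping your proposal does not contain. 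You also elide the self-framing and stability side conditions (e.g.\ \axNameFull{Inhale} needs $\framedBy{A}{P}$, \axNameFull{Exhale} needs self-framing pre- and postconditions), which is why the paper's lemma assumes $\stable{\omega}$ for the tracked states and states its pre- and postconditions as $\Stabilize{\cdot}$-closed sets.
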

Note that, in contrast to when one defines a proof system for a pre-existing operational semantics, the desired implication here is from operational to axiomatic semantics; this is due to the connection we are aiming for from back-end algorithms (defined operationally) to front-end proofs.

The rules for the axiomatic semantics of $\vinhale{A}$, $\vexhale{A}$, and sequential composition are shown in \figref{fig:ax-semantics-small}.
The rule \axName{Inhale} for $\vinhale{A}$ corresponds to the operational rule \opName{Inhale}, where $\omega$ has been lifted to the set of states $P$ (since $P * A = \bigcup_{\omega \in P} (\{ \omega \} * A$)).
The rule \axName{Exhale} for $\vexhale{A}$ is more involved, as it first requires weakening the set of initial states $P$ to $Q * A$.
Weakening is in general necessary to disentangle the states in $Q$ and $A$: For example, to exhale $\acc{\code{a.v}}$ from a precondition $\acc{\code{a.v}} * \acc{\code{b.v}} * \code{a.v} = \code{b.v}$, we have to first drop the equality $\code{a.v} = \code{b.v}$ because otherwise the resulting postcondition would refer to a memory location that is no longer owned.
Moreover, similarly to how Hoare logic hides the induction necessary to reason about unbounded while loops behind a loop invariant, our axiomatic semantics
hides the dual non-determinism of the operational semantics behind high-level connectives such as the separating conjunction.
Intuitively, in the rule \axName{Exhale}, the angelic choice is hidden in the choice of $Q$ and the split of every state in $P$ into a state in $Q$ and a state in $A$.
In our previous example $\vexhale{\acc{\code{a.v}} \lor \acc{\code{b.v}}}$,
we could choose $Q$ to be either $\acc{\code{a.v}}$ or $\acc{\code{b.v}}$,
\ie we could derive both $\axiomSem{ \acc{\code{a.v}} * \acc{\code{b.v}} }{\vexhale{\acc{\code{a.v}} \lor \acc{\code{b.v}}}}{\acc{\code{a.v}}}$ and $\axiomSem{ \acc{\code{a.v}} * \acc{\code{b.v}} }{\vexhale{\acc{\code{a.v}} \lor \acc{\code{b.v}}}}{\acc{\code{b.v}}}$.

Finally, the rule \axName{Seq} for sequential composition illustrates how the axiomatic semantics abstracts over the low-level details of the dual non-determinism in the operational semantics, such as the existence of the semantic function $\mathcal{S}$ in rule \opName{Seq}. Instead, the axiomatic rule \axName{Seq} uses an intermediate assertion $R$; its relation to $\mathcal{S}$ is proved once and for all in the soundness proof and, thus, does not have to be proved for each front-end.

Crucially, we have designed the axiomatic semantics such that it contains \emph{exactly one rule} per statement.
In particular, it contains no structural rules such as a frame rule or a consequence rule, which are not necessary in our setting.
This allows us to deconstruct an axiomatic semantic derivation into smaller blocks, to then reconstruct a proof in the front-end logic.
For example, one can derive from $\axiomSem{P}{C_1; C_2}{Q}$ the existence of some assertion $R$ such that $\axiomSem{P}{C_1}{R}$ and $\axiomSem{R}{C_2}{Q}$ hold.
Using this axiomatic semantics, we can now easily connect the correctness of the IVL program to the correctness of the front-end program, as we explain next.

\begin{figure}
\small
\begin{mathpar}
\inferhref{Frame}{FrameKey}
{\CSL{P}{C}{Q} \\
\freevars{F} \cap \modvars{C} = \emptyset}
{\CSL{P * F}{C}{Q * F}}
\and
\inferhref{Par}{ParKey}
{\CSL{P_l}{C_l}{Q_l} \\
\CSL{P_r}{C_r}{Q_r} \\
\ldots}
{\CSL{P_l * P_r}{C_l \; || \; C_r}{Q_l * Q_r}}
\and
\inferhref{Seq}{SeqKey}
{\CSL{P}{C_1}{R} \\
\CSL{R}{C_2}{Q}}
{\CSL{P}{C_1;C_2}{Q}}
\and
\inferhref{Cons}{ConsKey}
{\CSL{P'}{C}{Q'} \\
P \models P' \\
Q' \models Q}
{\CSL{P}{C}{Q}}
\and
\inferhref{Alloc}{AllocKey}
{r \notin \freevars{e}}
{\CSL{\top}{\code{r} := \fenew{e}}{ \acc{\code{r.v}} * r.v = e }}
\and
\inferhref{Free}{FreeKey}
{}
{\CSL{\acc{\code{q.v}}}{\fefree{\code{q}}}{ \top }}
\end{mathpar}
\caption{Selected CSL rules.
In the rule \ruleref{FrameKey},
$\freevars{F}$ and $\modvars{C}$ denote the set of variables free in $F$ and the set of variables potentially modified by $C$, respectively.}
\label{fig:csl-rules}
\end{figure}

\paragraph{Connecting to front-end programs and logics}
Let us now see how the axiomatic semantics enables us to construct a CSL proof for the front-end program from \figref{fig:running-example}.
Concretely, we build a CSL proof of the triple $\CSL{\wacc{\code{p.v}}}{C}{\top}$, where $C$ corresponds to the body of the method \fecode{main}.
To do this, we use the CSL rules shown in \figref{fig:csl-rules} and the \theIVL{} triples $\axiomSem{\top}{C}{B}$ for the methods \code{l}, \code{r}, and \code{main\_ivl} that we obtain from \thmref{thm:op-to-ax-simple}.

The first step of proving the CSL triple for \fecode{main} is to pair each statement in \fecode{main} with the corresponding code in \code{main\_ivl}.
For this, we use CSL's \ruleref{SeqKey}
rule and (the inversion of) \axName{Seq} to split the proofs for \code{main} and \code{main\_ivl} into smaller parts:
{
  \small
\begin{align*}
  &&& \axiomSem{\top}{\mvcode{inhale acc(p.v, \_)}}{A_0} \\
  &\CSL{A_0}{\mfecode{q := alloc(0)}}{A_1} && \axiomSem{A_0}{\mvcode{havoc q;} \mvcode{inhale acc(q.v) * q.v = 0}}{A_1} \\
  &\CSL{A_1}{\mfecode{q.v := p.v \|\| tmp := p.v}}{A_2} && \axiomSem{A_1}{\vexhale{P_l{*}P_r};\kern1pt\vhavoc{\code{tmp}};\kern1pt\vinhale{Q_l{*}Q_r}}{A_2} \\
  &\CSL{A_2}{\mfecode{tmp := tmp + q.v}}{A_3} && \axiomSem{A_2}{\mvcode{tmp := tmp + q.v}}{A_3} \\
  &\CSL{A_3}{\mfecode{free(q)}}{A_4} && \axiomSem{A_3}{\mvcode{exhale acc(q.v)}}{A_4} \\
  &\CSL{A_4}{\mfecode{assert tmp = p.v + p.v}}{B} && \axiomSem{A_4}{\mvcode{exhale tmp = p.v + p.v}}{B}
\end{align*}
}
\noindent
Note how deconstructing the applications of \axName{Seq} in the proof of \code{main\_ivl}  gives us intermediate assertions $A_{0-4}$, which we use to instantiate the intermediate assertion $R$ in \ruleref{SeqKey}.%
\footnote{Note that the CSL we use in this paper has the same state model as the IVL, and thus the IVL assertions do not need to be converted to CSL assertions.
Our axiomatic semantics can also be used to reconstruct proofs in program logics with different state models, but this goes beyond the scope of this paper.}
Matching statements of the front-end program to segments of the \theIVL{} program is straightforward since the front-end translation is typically defined statement by statement.

After decomposing the sequential compositions, we justify the CSL triple for each primitive front-end statement from the corresponding \theIVL{} triple.
For some statements like \mfecode{tmp := tmp + q.v}, this is trivial as the triples (and corresponding logic rules) match.
Let us now focus on the most interesting cases: \mfecode{q := alloc(0)}, \mfecode{q.v := p.v \|\| tmp := p.v}, and \mfecode{free(q)}.

\paragraph{The exhale-havoc-inhale pattern}
To derive the CSL triples for these statements, we observe that their encoding follows a pattern:
The \theIVL{} code first exhales the precondition $P$ of the CSL rule (omitted if $P=\top$), then havocs the variables modified by the statement (\code{q} for \mfecode{q := alloc(0)} and \code{tmp} for \mfecode{q.v := p.v \|\| tmp := p.v}), and finally inhales the postcondition $Q$ of the CSL rules (omitted if $Q=\top$), leading to the pattern
$\vexhale{P}; \vhavoc{x_1}; \ldots; \vhavoc{x_n}; \vinhale{Q}$.
To handle this general pattern, we can use the following lemma, which holds for any separation logic $\mathcal{L}$ with a consequence rule and a frame rule (see \autoref{sec:frontends} for the proof):

\begin{lemma}[Exhale-havoc-inhale]\label{lem:exhale-inhale-key}
For any separation logic $\mathcal{L}$ that has a frame rule and a consequence rule, if
$\anySL{P}{C}{Q}$ holds and
$\axiomSem{A}{\vexhale{P}; \vhavoc{x_1}; \ldots; \vhavoc{x_n}; \vinhale{Q}}{B}$ holds,
  where $\{ x_1, \ldots, x_n \} = \modvars{C}$,
 then
  $\anySL{A}{C}{B}$ holds.
\end{lemma}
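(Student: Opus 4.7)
My plan is to invert the axiomatic semantics derivation of the composed encoding, extract intermediate assertions together with semantic constraints on them, and then reconstruct the desired $\mathcal{L}$-triple by a single application of $\mathcal{L}$'s frame rule followed by consequence.

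First, I would repeatedly invert the sequential composition rule \axName{Seq} (recalling the paper's emphasis that the axiomatic semantics has exactly one rule per statement, so inversion is well-behaved) applied to $\axiomSem{A}{\vexhale{P}; \vhavoc{x_1}; \ldots; \vhavoc{x_n}; \vinhale{Q}}{B}$. This yields intermediate assertions $R_0$ and $R_1$ such that $\axiomSem{A}{\vexhale{P}}{R_0}$ holds, a chain of havoc derivations takes $R_0$ to $R_1$, and $\axiomSem{R_1}{\vinhale{Q}}{B}$ holds. From the rule \axName{Exhale} I extract $A \models R_0 * P$; from \axName{Inhale} I extract $B = R_1 * Q$. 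The havoc rule (a standard construction, though not shown in the excerpt) produces the closure of its input under arbitrary valuations of the havoced variable, so $R_1$ is the closure of $R_0$ under reassigning each of $x_1, \ldots, x_n$. Two consequences follow: (i) $R_0 \subseteq R_1$, because re-havocing a state to its original value keeps it in place; and (ii) $R_1$ is semantically invariant under changing any of $x_1, \ldots, x_n$, so its (semantic) free variables are disjoint from $\modvars{C} = \{x_1, \ldots, x_n\}$.

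The central step is then to apply $\mathcal{L}$'s frame rule to the hypothesis $\anySL{P}{C}{Q}$ with frame $R_1$, yielding $\anySL{P * R_1}{C}{Q * R_1}$; the side condition $\freevars{R_1} \cap \modvars{C} = \emptyset$ is discharged by (ii). A final application of $\mathcal{L}$'s consequence rule closes the argument: the entailment $A \models P * R_1$ follows from $A \models R_0 * P$ using (i) together with commutativity of $*$, and $Q * R_1 \models B$ is immediate from $B = R_1 * Q$.

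The main obstacle I expect is the frame-freshness side condition. It requires that the notion of free variables used by $\mathcal{L}$'s frame rule be compatible with the semantic invariance produced by havoc, i.e., that an assertion invariant under changing $x$ can legitimately be treated by $\mathcal{L}$ as not having $x$ free. Discharging this cleanly likely depends on the shared semantic state model between $\mathcal{L}$ and the \theIVL{} instantiation; for a syntactic $\freevars{\cdot}$ it may instead require choosing a syntactic representative of $R_1$ that manifestly avoids the havoced variables, which is where the real work of formalizing this lemma sits.
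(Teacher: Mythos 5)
Your proposal is correct and follows essentially the same route as the paper's proof: the paper likewise inverts \axNameFull{Seq}, \axNameFull{Havoc}, \axNameFull{Exhale}, and \axNameFull{Inhale} to obtain the post-havoc assertion $F = \exists x_1, \ldots, x_n \ldotp R$ (your $R_1$), applies $\mathcal{L}$'s frame rule to $\anySL{P}{C}{Q}$ with frame $F$, and concludes by consequence from $A \models P * F$ and $B = F * Q$. The freshness side condition you flag is discharged exactly as you anticipate, since the assertions are semantic and shared between $\mathcal{L}$ and \theIVL{}: the existential form of $F$ immediately gives $\freevars{F} \cap \modvars{C} = \emptyset$.
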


Intuitively, this lemma shows that a \theIVL{} triple for the exhale-havoc-inhale pattern allows us to obtain the corresponding CSL triple.
In the case of \mfecode{q := alloc(0)}, this lets us lift \ruleref{AllocKey} to the precondition $A_0$ and postcondition $A_1$, giving us exactly the triple we need.
To justify the triple for \mfecode{q.v := p.v \|\| tmp := p.v}, we need to establish the premises of the rule \ruleref{Par},
$\CSL{P_l}{\mfecode{q.v := p.v } }{Q_l}$ and $\CSL{P_r}{ \mfecode{tmp := p.v} }{Q_r}$,
which can be derived from the correctness of the methods \code{l} and \code{r}
using a lemma similar to \lemref{lem:exhale-inhale-key},
as we formally show in \secref{sec:frontends}.

\paragraph{Summary}
We have now seen how to justify the translational verification of the program from \figref{fig:running-example} in CSL in three steps. First, we showed that the successful verification of its \theIVL{} encoding in a back-end implies that the \theIVL{} program is valid. Second, we used the soundness theorem for the axiomatic IVL semantics to derive judgments in the axiomatic semantics. Third, we use those judgments to prove the desired CSL triple.
Each of these steps is well-suited for its task: The operational semantics allows us to connect to the back-end verifiers, while the axiomatic semantics facilitates the reconstruction of the front-end logic proof---both linked by \thmref{thm:op-to-ax-simple}.
\section{Semantics}
\label{sec:semantics}

In this section, we present an operational and an axiomatic semantics for the \theIVL{} language defined in \figref{fig:viper-language}.
We first define in \secref{subsec:idf-algebra} an \IDFAlgebra{} that captures both separation logic and implicit dynamic frames state models.
We then formalize the operational semantics of \theIVL{} in \secref{subsec:op-semantics} and define its axiomatic semantics and prove their equivalence in \secref{subsec:ax-semantics}. We instantiate \theIVL{} for key features of Viper in \secref{sec:viper-inst}.

\subsection{An Algebra for Separation Logic and Implicit Dynamic Frames}
\label{subsec:idf-algebra}

A standard way to capture different separation logic state models is to use a \emph{separation algebra}~\cite{Calcagno2007,Dockins2009}, \ie a partial commutative monoid $(\Sigma, \oplus)$, where $\Sigma$ is the set of all states, and $\oplus$ is a partial, commutative, and associative binary operator, used to combine states (\eg via the separating conjunction operator $*$).
In SL, assertions about values of heap locations must also assert ownership of those heap locations.
In particular, asserting that a heap location \code{x.f} has a value $5$ requires using the points-to predicate  $\pointsto{\code{x.f}}{5}$), which also expresses ownership of the location \code{x.f}.
This requirement is embedded in the SL state model.
For example, a typical SL state with a heap and fractional permissions (ignoring local variables for now) is
$\Sigma_{\mathit{SL}} \triangleq \left( L \rightharpoonup (V \times (0, 1]) \right)$, \ie a partial function from a set $L$ of heap locations to pairs of values from a set $V$ and positive fractional permissions.
That is, any value for a heap location is associated with a strictly positive permission.

In contrast, in implicit dynamic frames, an assertion may constrain the value of a heap location independently of expressing ownership.
For example, $\code{x.f} = 5$ is a valid IDF assertion that expresses that $\code{x.f}$ stores the value $5$ without expressing ownership of \code{x.f}.
However, IDF requires assertions used as pre- and postconditions, loop invariants, frames (for the frame rule), etc.\ to be \emph{self-framing}, that is, to express ownership of all heap locations they mention. For example, $\acc{\code{x.f}} * \code{x.f} = 5$ is self-framing, while
$\code{x.f} = 5$ is not.
To capture IDF states with fractional permissions, we define the state model
$\Sigma_{\mathit{IDF}} \triangleq (L \rightharpoonup V) \times (L \rightharpoonup [0, 1])$.%
\footnote{In addition, values must exist for those heap locations where the state has non-zero permission. That is, $\Sigma_{\mathit{IDF}}$ is restricted to states $(h, \pi)$ such that $\forall l \ldotp \pi(l) > 0 \Rightarrow l \in \dom(h)$.}
In contrast to $\Sigma_{\mathit{SL}}$, values and permissions are separated in $\Sigma_{\mathit{IDF}}$,
which allows
states $(h, \pi)$ where $h(\code{x.f}) = 5$ but $\pi(\code{x.f}) = 0$.

We call a state $(h, \pi) \in \Sigma_{\mathit{IDF}}$ \emph{stable} iff it contains values exactly for the heap locations with non-zero permission, \ie $\dom(h) = \{ l \mid \pi(l) > 0 \}$.
Stable states are exactly those that can be represented as states in $\Sigma_{\mathit{SL}}$;
By construction, all states in $\Sigma_{\mathit{SL}}$ are stable.

\begin{figure}
    \small
\begin{mathpar}
    a \oplus b = b \oplus a \and
    a \oplus (b \oplus c) = (a \oplus b) \oplus c \and
    \thibault{c = a \oplus b} \land c = c \oplus c \Rightarrow a = a \oplus a \\
    x = x \oplus \core{x} \and
    \core{x} = \core{x} \oplus \core{x} \and
    x = x \oplus c \Rightarrow |x| \succeq c \and
    \core{a \oplus b} = \core{a} \oplus \core{b} \and
    \thibault{x \oplus a = x \oplus b} \land \core{a} = \core{b} \Rightarrow a = b \and
    \stable{\omega} \Rightarrow \omega = \stabilize{\omega} \and
    \stable{\stabilize{\omega}} \and
    \stabilize{a \oplus b} = \stabilize{a} \oplus \stabilize{b} \and
    x = \stabilize{x} \oplus \core{x} \and
    a = b \oplus \stabilize{\core{c}} \Rightarrow a = b
\end{mathpar}
\caption{Axioms for our \IDFAlgebra{} $(\Sigma, \oplus, \core{\_}, \stableNa, \stabilizeNa)$.
We define $(\omega' \succeq \omega) \triangleq (\exists r \ldotp \omega' = \omega \oplus r)$.}
\label{fig:IDF-algebra}
\end{figure}

To capture arbitrary SL and IDF states, we define an \emph{\IDFAlgebra{}} as follows:
\begin{definition}
    An \emph{\IDFAlgebra{}} is a tuple $(\Sigma, \oplus, \core{\_}, \stableNa, \stabilizeNa)$ that satisfies all axioms in \figref{fig:IDF-algebra},
    where $\Sigma$ is a set of states,
    $\oplus$ is a partial, commutative, and associative addition on $\Sigma$ (\ie a partial function from $\Sigma \times \Sigma$ to $\Sigma$),
    $\core{\_}$ 
    and $\stabilizeNa{}$ are endomorphisms of $\Sigma$,
    and $\stableNa$ is a predicate on $\Sigma$.
\end{definition}

The set $\Sigma$ and the partial addition $\oplus$ are the standard components of a separation algebra.
Using $\oplus$, we define the standard partial order $\succeq$ induced by $\oplus$ as $(\omega' \succeq \omega) \triangleq (\exists r \ldotp \omega' = \omega \oplus r)$.
We require \emph{positivity} ($c = a \oplus b = c \land c = c \oplus c \Rightarrow a = a \oplus a$) to ensure that the partial order is antisymmetric ($a \succeq b \land b \succeq a \Rightarrow a = b$).
Intuitively, the endomorphism $\core{\_}$ projects a state $\omega$ on its largest \emph{duplicable} part, \ie $|\omega|$ is the largest state smaller than $\omega$ such that $|\omega| = |\omega| \oplus |\omega|$.
Similarly, the endomorphism $\stabilizeNa{}$ projects a state $\omega$ on its largest \emph{stable} part, \ie $\stabilize{\omega}$ is the largest stable state smaller than $\omega$.

\paragraph{Instantiations}
For our concrete IDF state model $\Sigma_{\mathit{IDF}}$,
the combination $(h_1, \pi_1) \oplus (h_2, \pi_2)$ is defined iff $h_1$ and $h_2$ agree on the locations to which both states hold non-zero permission and the sums of their permissions pointwise is at most 1,
\ie
iff $\forall l \ldotp (\pi_1(l) + \pi_2(l) \leq 1) \land (l \in \dom(h_1) \cap \dom(h_2) \Rightarrow h_1(l) = h_2(l))$.
When the combination is defined, $(h_1, \pi_1) \oplus (h_2, \pi_2) \triangleq (h_1 \cup h_2, \pi_1 + \pi_2)$.
Knowledge about heap values is duplicable, whereas permissions are not.
Thus, $\core{\_}$ puts all permissions to $0$ but preserves the heap, \ie $\core{(h, \pi)} \triangleq (h, \lambda l \ldotp 0)$.
Moreover, $\stabilizeNa$ erases all values for heap locations to which the state does not hold any permission,
\ie $\stabilize{(h, \pi)} \triangleq ((\lambda l \ldotp \text{ if } \pi(l) > 0 \text{ then } h(l) \text{ else } \bot), \pi)$.

Separation algebra instances can also be instantiated as \IDFAlgebra{}s, by defining $\stableNa$ to be true for all states, and $\stabilizeNa$ to be the identity function on $\Sigma$.
For example, $\Sigma_{\mathit{SL}}$ (defined above) can be instantiated as an \IDFAlgebra{} with these definitions of $\stableNa$ and $\stabilizeNa$,
and with $\core{\_}$ mapping every state to the unit state (where all permissions are $0$, and the domain of the heap is empty).
Moreover, like separation algebras~\cite{Dockins2009, Iris1}, \IDFAlgebra{}s support standard constructions like the agreement algebra (where only $\omega = \omega \oplus \omega$ holds), and can be constructed by combining smaller algebras, via combinators such as product and sum types (where both types must be \IDFAlgebra{}s), function types (where only the codomain must be an \IDFAlgebra{}), etc.

\paragraph{State model for \theIVL{}}
Our \theIVL{} framework can be instantiated for any \IDFAlgebra{}.
We obtain the state model by extending this \IDFAlgebra{} with a store of local variables,
\ie a partial mapping from variables in $\mathit{Var}$ to values in $\mathit{Val}$.
Concretely,
given an \IDFAlgebra{} with carrier set $\Sigma$,
we define the state model for \theIVL{} as the product algebra $\Sigma_{\mathit{IVL}} \triangleq ((\mathit{Var} \rightharpoonup \mathit{Val}) \times \Sigma)$,
where the store $\mathit{Var} \rightharpoonup \mathit{Val}$ is instantiated to the agreement algebra, \ie addition on stores is defined for identical stores (as the identity).
Using the agreement algebra for the store ensures that $\vinhaleNa$ and $\vexhaleNa$ have no effect on local variables.

\paragraph{Self-framing IDF assertions}
Given an arbitrary \IDFAlgebra{}, we can define a general notion of \emph{self-framing} assertions and assertions \emph{framing} other assertions as follows.
\begin{definition}
In the following, $P$ is an IDF assertion (\ie a set of states from an \IDFAlgebra{}).
\begin{itemize}
\item  $P$ is \emph{self-framing}, written $\selfFraming{P}$, iff $\forall \omega \ldotp \omega \in P \Leftrightarrow \stabilize{\omega} \in P$.
\item A state $\omega$ \emph{frames} $P$, written $\relStable{\omega}{P}$, iff $\selfFraming{\{ \omega \} * P}$.
\item An IDF assertion $B$ \emph{frames} $P$, written $\framedBy{B}{P}$, iff $\forall \omega \in B \ldotp \stable{\omega} \Rightarrow \relStable{\omega}{P}$.
\item $P$ \emph{frames} an \emph{expression} (\ie a partial function from states to values) $e$, written $\framedBy{P}{e}$, iff $e(\omega) \text{ is defined}$ for all states $\omega \in P$.
\end{itemize}
\end{definition}

Those different notions are tightly connected:
If $A$ is self-framing and $A$ frames $B$ then $A * B$ is self-framing.
For example, the assertion $A \triangleq (\acc{\code{x.f}} * \code{x.f} = 5)$ is self-framing, because any state $\omega_A \in A$
has full permission to \code{x.f}, and thus $\stabilize{\omega_A}$ will retain the knowledge that \code{x.f} is $5$, and hence $\stabilize{\omega_A} \in A$.
In contrast, the assertion $B \triangleq (\code{x.f} = 5)$ is not self-framing,
since a state $\omega_B$ with no permission to \code{x.f} but with the knowledge that \code{x.f} is $5$ satisfies $B$,
but $\stabilize{\omega_B}$ will not retain the knowledge that $\code{x.f}=5$, and hence will not satisfy $B$.
Moreover, any state that satisfies $\acc{\code{x.f}}$ frames $B$,
thus the assertion  $\acc{\code{x.f}}$ frames $B$.
Note that, in an instantiation with SL states (\eg $\Sigma_{\mathit{SL}}$), all assertions are self-framing, since all SL states are stable.

\subsection{Operational Semantics}
\label{subsec:op-semantics}

We now formally define the operational semantics of \theIVL{} for the  state model described above
(given an arbitrary \IDFAlgebra{}).
As explained in \secref{subsec:op-semantics-key}, our operational semantics has judgments of the form
$\redStmt{C}{\omega}{S}$, where $\Delta$ is a type context,\footnote{%
In this section, we do not discuss typing in details, but our Isabelle formalization includes it.
In particular, it ensures that our operational and axiomatic semantics deal only with well-typed states, \ie states whose local store and heap contain values of the right types (defined by the type context $\Delta$).
By default, all states discussed in this section are well-typed.
}
$C$ is a statement, $\omega$ is a state, and $S$ is a set of states (to capture demonic non-determinism; angelic non-determinism is captured by the existence of different derivations $\redStmt{C}{\omega}{S_1}$ and $\redStmt{C}{\omega}{S_2}$).

\begin{figure}
\footnotesize
\begin{mathpar}
\inferhref{InhaleOp}{InhaleOpFull}
{\relStable{\omega}{A}}
{\redStmt{\vinhale{A}}{\omega}{ \{ \omega' \mid \exists \omega_A \in A \ldotp \omega' = \omega \oplus \omega_A \land \stable{\omega'} \} }}
%
\and
\inferhref{ExhaleOp}{ExhaleOpFull}
{\omega = \omega' \oplus \omega_A \\
\omega_A \in A \\
\stable{\omega'}}
{\redStmt{\vexhale{A}}{\omega}{ \{ \omega' \}}}
\and
\inferhref{SeqOp}{SeqOpFull}
{\redStmt{C_1}{\omega}{S_1} \\
\forall \omega_1 \in S_1 \ldotp \redStmt{\thibault{C_2}}{\thibault{\omega_1}}{\thibault{\mathcal{S}}(\omega_1)} }
{\redStmt{C_1;C_2}{\omega}{ \cup_{\omega_1 \in S_1} \thibault{\mathcal{S}}(\omega_1)}}
\and
\inferhref{AssignOp}{AssignOpFull}
{\Delta(x) = \tau \\
 e(\omega) = v \\
 v \in \tau}
{\redStmt{\vassign{x}{e}}{\omega}{\{ \omega[x \mapsto v] \}}}
\and
\inferhref{SkipOp}{SkipOpFull}
{}
{\redStmt{\vskipp}{\omega}{ \{ \omega \} }}
\and
\inferhref{HavocOp}{HavocOpFull}
{\Delta(x) = \tau}
{\redStmt{\vhavoc{x}}{\omega}{\{
\omega[x \mapsto v] \mid v \in \tau  \}}}
\quad
%
%
%
%
%
%
%
%
\inferhref{IfTOp}{IfTOpFull}
{b(\omega) = \top \\
\redStmt{C_1}{\omega}{S_1}}
{\redStmt{\vif{b}{C_1}{C_2}}{\omega}{S_1}}
\quad
%
\inferhref{IfFOp}{IfFOpFull}
{b(\omega) = \bot \\
\redStmt{C_2}{\omega}{S_2}}
{\redStmt{\vif{b}{C_1}{C_2}}{\omega}{S_2}}
\end{mathpar}
\caption{Operational semantics rules.
}
\label{fig:op-semantics}
\end{figure}

The rules for the operational semantics are given in \figref{fig:op-semantics}.
%
As shown by the rule \opNameFull{Inhale},
$\vinhale{A}$ can reduce in a state $\omega$ only if $\omega$ frames $A$.
In our concrete instantiation $\Sigma_{\mathit{IDF}}$, this means that $\omega$ or $A$ must contain the permission to any heap location mentioned in $A$.
For example, $\vinhale{\code{x.f} = 5}$ can reduce correctly only in a state $\omega$ that has some permission to \code{x.f}.
If $\omega$ has a different value than $5$ for \code{x.f}, the statement will reduce to an empty set of states,
\ie $\redStmt{\vinhale{\code{x.f} = 5}}{\omega}{\varnothing}$, capturing the fact that we inhaled an assumption inconsistent with our state.
In this case, the rest of the program is trivially correct (because it will be executed in no state).
If $\omega$ has value $5$ for \code{x.f}, then the statement will reduce to the singleton set $\{ \omega \}$,
\ie $\redStmt{\vinhale{\code{x.f} = 5}}{\omega}{\{ \omega \}}$.
Finally, inhaling $\acc{\code{x.f}}$ in a state $\omega$ with no permission and no value to \vcode{x.f} will result in a set with multiple states (potentially infinitely many), one state for each possible value of \vcode{x.f}.
We require $\stable{\omega'}$ in the rule to ensure that executing a statement in any stable state leads to a set of stable states,
\ie $\forall \omega \ldotp \stable{\omega} \land \redStmt{\omega}{C}{S} \Rightarrow (\forall \omega' \in S \ldotp \stable{\omega'})$.
In other words, the operational semantics preserves the stability of states.

Dually, the rule \opNameFull{Exhale} requires the final state $\omega'$ to be stable.
This ensures that values of heap locations for which the state lost all permission will be erased.
For example, $\vexhale{\acc{x.f}}$ succeeds only in a state with full permission to \vcode{x.f}, and results in a final state
without any permission nor value for \vcode{x.f}.
Note that the rule \opNameFull{Exhale} is the only atomic rule that uses angelic nondeterminism, because the rule can be applied with different $\omega'$ (corresponding to different angelic choices).
(The rules \opNameFull{Inhale} and \opNameFull{Havoc} use demonic non-determinism, while \opNameFull{Assign} and \opNameFull{Skip} are deterministic.)
The rule \opNameFull{Seq} first executes $C_1$ in $\omega$, which yields a set of states $S_1$.
Since $S_1$ captures demonic choices, $C_2$ must be executed in \emph{all} states from $S_1$, but the angelism in $C_2$ can be resolved differently for each state, which is captured by the choice of the function $\mathcal{S}$.
The function $\mathcal{S}$ must map every state $\omega_1$ from $S_1$ (the set of states obtained after executing $C_1$ in $\omega$) to a set of states $\mathcal{S}(\omega_1)$ that can be reached by executing $C_2$ in $\omega_1$.

Finally, note that expressions in \theIVL{} are \emph{semantic}, \ie they are \emph{partial} functions from states to values.
We model them as partial functions because they might be heap-dependent, and thus might not be defined for all states.
For example, the expression $\code{x.f} = 5$ is only meaningful in states where \code{x.f} has a value.
The rules \opNameFull{Assign}, \opNameFull{IfT}, and \opNameFull{IfF} require that the expressions are defined in the initial state $\omega$.

\subsection{Axiomatic Semantics}
\label{subsec:ax-semantics}

\begin{figure}
\footnotesize
\begin{mathpar}
\inferhref{SkipAx}{SkipAxFull}
{\selfFraming{P}}
{\axiomSem{P}{\vskipp}{P}}
\and
\inferhref{InhaleAx}{InhaleAxFull}
{\selfFraming{P} \\
 \framedBy{A}{P}}
{\axiomSem{P}{\vinhale{A}}{P * A }}
\and
\inferhref{ExhaleAx}{ExhaleAxFull}
{\selfFraming{P} \\
 P \models Q * A \\
 \selfFraming{Q}}
{\axiomSem{P}{\vexhale{A}}{Q}}
\and
\inferhref{IfAx}{IfAxFull}
{\selfFraming{P} \\
 \framedBy{P}{b} \\
 \axiomSem{P \wedge b}{C_1}{B_1} \\
 \axiomSem{P \wedge \lnot b}{C_2}{B_2}}
{\axiomSem{P}{\vif{b}{C_1}{C_2}}{B_1 \lor B_2}}
\quad
%
%
%
%
%
\inferhref{HavocAx}{HavocAxFull}
{\selfFraming{P} \\
 \Delta(x) = \tau}
{\axiomSem{P}{\vhavoc{x}}{\exists x \in \tau \ldotp P}}
\and
\inferhref{SeqAx}{SeqAxFull}
{\axiomSem{P}{C_1}{R} \\
 \axiomSem{R}{C_2}{Q}}
{\axiomSem{P}{C_1;C_2}{Q}}
\and
\inferhref{AssignAx}{AssignAxFull}
{\selfFraming{P} \\
 \framedBy{P}{e}}
{\axiomSem{P}{\vassign{x}{e}}{ \exists v \ldotp
\substitution{P}{v}{x} \land
x = \substitution{e}{v}{x} }}
%
\end{mathpar}
\caption{Axiomatic semantic rules.
}
\label{fig:ax-semantics}
\end{figure}

Using the same extended state model as in the operational semantics, we define an axiomatic semantics with judgments of the form
$\axiomSem{P}{C}{Q}$, where $\Delta$ is a type context, $P$ and $Q$ are assertions (sets of states), and $C$ is a \theIVL{} statement.
All rules are shown in \figref{fig:ax-semantics}.
Multiple rules have side-conditions requiring the preconditions and postconditions to be self-framing,
ensuring that if we have $\axiomSem{P}{C}{Q}$, $P$ and $Q$ are self-framing.

As explained in \secref{subsec:ax-semantics-key}, our operational and axiomatic semantics are equivalent. The soundness property expressed in \thmref{thm:op-to-ax-simple} (in \secref{subsec:ax-semantics-key}) allows one to bridge the gap between a valid \theIVL{} program (according to \defref{def:correct}) and the front-end program logic.
The proof of \thmref{thm:op-to-ax-simple} is not straightforward.
In particular, our proof explicitly tracks the angelic choices made based on the sequence of past states of each execution,
as shown by the following lemma, which implies \thmref{thm:op-to-ax-simple}.
Let $\Stabilize{A} \triangleq \{ \omega' \mid \stabilize{\omega'} \in A \}$ for a set of states $A$.
\newcommand{\Sfam}{\mathcal{S}}
\newcommand{\Shist}{\Omega}
\begin{lemma}
  \label{lem:op-to-ax-ind}
  Given a set $\Shist \in \powerset{ \Sigma^* \times \Sigma}$ of lists of past states paired with current states, a \theIVL{} statement $C$,
  and a function $\Sfam$ mapping
  elements from $\Omega$ to sets of states,
  if for all $(l, \omega) \in \Shist$ we have
  \begin{enumerate}
    \item $\stable{\omega}$, and
    \item $\redStmt{C}{\omega}{ \Sfam(l, \omega) }$,
  \end{enumerate}
    then $\axiomSem{\Stabilize{ \{ \omega \mid (l, \omega) \in \Shist \} }}{C}{\Stabilize{ \bigcup_{ (l, \omega) \in \Shist} \Sfam(l, \omega) }}$.
\end{lemma}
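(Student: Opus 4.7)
The plan is to prove the lemma by structural induction on the statement $C$, universally quantifying over $\Omega$ and $\mathcal{S}$ in the induction hypothesis so that the sequential-composition case can re-instantiate them with an extended set of histories.

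For the atomic cases (\ruleref{SkipOpFull}, \ruleref{InhaleOpFull}, \ruleref{ExhaleOpFull}, \ruleref{AssignOpFull}, \ruleref{HavocOpFull}), I would first observe that $\Stabilize{\{\omega \mid (l,\omega)\in\Omega\}}$ is self-framing by construction of $\Stabilize{\cdot}$ together with the idempotence of $\stabilizeNa{}$, and then match the union $\Stabilize{\bigcup_{(l,\omega)\in\Omega}\mathcal{S}(l,\omega)}$ to the postcondition prescribed by the corresponding axiomatic rule. For \ruleref{InhaleOpFull}, the framing side condition $\framedBy{\Stabilize{\{\omega\mid(l,\omega)\in\Omega\}}}{A}$ follows from the per-state premise $\relStable{\omega}{A}$ lifted pointwise through $\stabilizeNa{}$. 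For \ruleref{ExhaleOpFull}, the per-state splittings $\omega = \omega' \oplus \omega_A$ together aggregate into the required entailment $P \models Q * A$, where $Q$ is precisely the $\Stabilize{\cdot}$-closure of the chosen remainders across $\Omega$; this is the step where the angelic choice of $\omega_A$ at each $\omega$ is recorded into the postcondition.

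For the conditional, I would partition $\Omega$ according to the Boolean value of $b$ at each state (using that $b$ is framed at every state of $\Omega$) and apply the induction hypothesis separately to the two resulting sub-sets before combining via \ruleref{IfAxFull}. The crucial case is sequential composition $C_1; C_2$: by inverting \ruleref{SeqOpFull} at each $(l,\omega)\in\Omega$, I obtain a set $S_1(l,\omega)$ with $\redStmt{C_1}{\omega}{S_1(l,\omega)}$ and, for each $\omega_1 \in S_1(l,\omega)$, a set $\mathcal{S}_{l,\omega}(\omega_1)$ with $\redStmt{C_2}{\omega_1}{\mathcal{S}_{l,\omega}(\omega_1)}$, such that $\mathcal{S}(l,\omega) = \bigcup_{\omega_1 \in S_1(l,\omega)} \mathcal{S}_{l,\omega}(\omega_1)$. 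I apply the induction hypothesis to $C_1$ using the original $\Omega$ with outcome function $(l,\omega)\mapsto S_1(l,\omega)$, obtaining a triple ending in an intermediate assertion $R$. Then I apply the induction hypothesis to $C_2$ on the \emph{extended} set
\[
\Omega' \;=\; \bigl\{\,(l{\cdot}\omega,\ \omega_1) \;\bigm|\; (l,\omega)\in\Omega,\ \omega_1 \in S_1(l,\omega)\,\bigr\}
\]
with function $\mathcal{S}'(l{\cdot}\omega,\omega_1) = \mathcal{S}_{l,\omega}(\omega_1)$. Stability of each $\omega_1$ holds because \theIVL{}'s operational semantics preserves stability (the $\stable{\omega'}$ premises in \ruleref{InhaleOpFull} and \ruleref{ExhaleOpFull}, propagated through the other rules). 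Finally, checking that the postcondition for $C_1$ agrees with the precondition obtained for $C_2$ reduces to the set-equality $\Stabilize{\bigcup_{(l,\omega)\in\Omega} S_1(l,\omega)} = \Stabilize{\{\omega_1 \mid (l',\omega_1)\in\Omega'\}}$, which follows by unfolding definitions; then \ruleref{SeqAxFull} closes the case.

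The main obstacle is exactly this sequential-composition step, and it explains the otherwise mysterious presence of the history component $\Sigma^*$ in $\Omega$. Without the history prefix, two different traces $(l_1,\omega)$ and $(l_2,\omega)$ could send $C_1$ to overlapping intermediate results yet require \emph{different} angelic choices inside $C_2$; collapsing them into one $\omega_1$ would force a single choice and break the induction. The prefix $l{\cdot}\omega$ carried into $\Omega'$ keeps these executions logically separated, allowing the induction hypothesis on $C_2$ to be applied to a family of operational derivations one per path, while the final $\Stabilize{\cdot}$-lift on the postcondition collapses them back into a set-based assertion as required by \lemref{lem:op-to-ax-ind}'s conclusion. \thmref{thm:op-to-ax-simple} then follows by specializing to $\Omega = \{(\varepsilon, \omega) \mid \stable{\omega}\}$, using validity to supply $\mathcal{S}$.
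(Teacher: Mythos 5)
Your proposal is correct and follows essentially the same route as the paper: structural induction on $C$ with $\Omega$ and $\Sfam$ generalized in the induction hypothesis, where the sequential-composition case is handled by extending each history with the current state so that coinciding intermediate states from different executions remain distinguishable and can carry different angelic resolutions of $C_2$ — exactly the paper's stated reason for tracking past states. Your construction of $\Omega'$ and $\mathcal{S}'$ and the final specialization to $\Omega = \{(\varepsilon,\omega) \mid \stable{\omega}\}$ match the intended argument.
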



An element $([\omega_0, \ldots, \omega_n], \omega_{n+1}) \in \Shist$ represents all the intermediate states of one execution up to now,
which we use to resolve the future angelism.
The function $\Sfam$ maps each such element to a set of states that can be reached from $\omega_{n+1}$ by executing $C$.
Intuitively, the precondition collects all the current states from $\Omega$, and the postcondition collects all the states they can reach by executing $C$.
The proof proceeds by structural induction over the statement $C$.

\paragraph{The reason for tracking sequences of past states}
The reader might be wondering why \lemref{lem:op-to-ax-ind} keeps track of the list of all past states, instead of only keeping track of the current state.
The reason is that only keeping track of the current state would not
allow us to prove the \thibault{inductive} case for sequential composition.
To understand why, \emph{assume} that
$\Shist$ is \emph{instead} a set of \emph{single states},
and $\Sfam$ is a function from \emph{single states} to a set of states.
\thibault{Consider proving the inductive case for the sequential composition, \ie for $C \triangleq (C_1; C_2)$.}
\thibault{Assume that, in this scenario, we are given $\Shist = \{ \omega_A, \omega_B \}$, and that $\Sfam$ is such that $\Sfam(\omega_A) = \{ \omega'_A \}$
and $\Sfam(\omega_B) = \{ \omega'_B \}$.}
\thibault{From assumption (2) in \lemref{lem:op-to-ax-ind}, we know that}
$\redStmt{C_1; C_2}{\omega_A}{ \{ \omega'_A \} }$ and $\redStmt{C_1; C_2}{\omega_B}{ \{ \omega'_B \} }$ hold.
It might be the case that executing $C_1$ in either $\omega_A$ or $\omega_B$ yields the same set of states $\{ \omega' \}$,
\ie $\redStmt{C_1}{\omega_A}{ \{ \omega' \} }$ and $\redStmt{C_1}{\omega_B}{ \{ \omega' \} }$,
but that the angelic non-determinism when executing $C_2$ in state $\omega'$ was resolved differently in both executions, leading to $\omega'_A$ in the execution from $\omega_A$ and $\omega'_B$ in the execution from $\omega_B$.
More concisely, the executions of $C_1; C_2$ in $\omega_A$ and $\omega_B$ might have been constructed as follows:
\begin{align*}
    \redStmt{C_1}{\omega_A}{ \{ \omega' \} } \land \redStmt{C_2}{\omega'}{ \{ \omega'_A \} } \Rightarrow \redStmt{C_1; C_2}{\omega_A}{ \{ \omega'_A \} } \\
    \redStmt{C_1}{\omega_B}{ \{ \omega' \} } \land \redStmt{C_2}{\omega'}{ \{ \omega'_B \} } \Rightarrow \redStmt{C_1; C_2}{\omega_B}{ \{ \omega'_B \} }
\end{align*}
In this case, our intermediate set of states between $C_1$ and $C_2$ is $\Shist \triangleq \{ \omega' \}$.
To apply our induction hypothesis for $C_2$,
we need to find a function $\Sfam_2$ that maps $\omega'$ to both $\{ \omega'_A \}$ and $\{ \omega'_B \}$, as required by the assumption (2) in the lemma, which is not possible.

To solve this issue, we explicitly keep track of all past states.
In this way, our intermediate set of states for the previous example is $\Shist \triangleq \{ ([\omega_A], \omega'), ([\omega_B], \omega') \}$,
which allows us to define a function $\Sfam_2$ such that $\Sfam_2([\omega_A], \omega') = \{ \omega'_A \}$ and $\Sfam_2([\omega_B], \omega') = \{ \omega'_B \}$,
allowing us to apply our induction hypothesis and prove the sequential composition case.

\paragraph{Completeness}
To show that our operational and axiomatic semantics are equivalent, we also prove the following completeness property
(whose proof is less involved than for soundness):
\begin{theorem}[Completeness]\label{thm:completeness}
    Assume $\axiomSem{P}{C}{Q}$,
    and let $\omega \in P$ such that $\stable{\omega}$.
    Then there exists $S$ such that $\redStmt{C}{\omega}{S}$ and $S \subseteq Q$.
\end{theorem}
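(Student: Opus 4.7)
The plan is to prove \thmref{thm:completeness} by structural induction on the derivation of $\axiomSem{P}{C}{Q}$, constructing for every derivation and every stable $\omega \in P$ an operational execution $\redStmt{C}{\omega}{S}$ with $S \subseteq Q$. For the base cases \axNameFull{Skip}, \axNameFull{Assign}, and \axNameFull{Havoc}, the corresponding operational rules directly produce singletons or type-indexed families matching the axiomatic postconditions; the self-framing and framing premises ensure that the expressions involved are defined on $\omega$. The case \axNameFull{If} is similar: $\framedBy{P}{b}$ gives definedness of $b(\omega)$, so we branch on $b(\omega)$ and invoke the induction hypothesis on the selected subderivation, the resulting set lying in $B_1 \lor B_2$ as required.

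The \axNameFull{Inhale} case uses $\framedBy{A}{P}$ together with $\stable{\omega}$ to deduce the side-condition $\relStable{\omega}{A}$ of \opNameFull{Inhale}. Applying that rule produces $S = \{\omega \oplus \omega_A \mid \omega_A \in A,\ \stable{\omega \oplus \omega_A}\}$, which is contained in $P * A$ since $\omega \in P$ and each $\omega_A \in A$.

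The main case is \axNameFull{Exhale}. From $\omega \in P$ and $P \models Q * A$ we obtain some decomposition $\omega = q \oplus a$ with $q \in Q$ and $a \in A$. To match the premises of \opNameFull{Exhale} we need a stable $Q$-witness together with an $A$-residual summing to $\omega$. The natural candidate is $\omega' = \stabilize{q}$, which belongs to $Q$ by self-framing and is stable by axiom. The delicate step is to identify a suitable $\omega_A \in A$ realising $\omega = \stabilize{q} \oplus \omega_A$: using $\stable{\omega}$ together with the linearity $\stabilize{q \oplus a} = \stabilize{q} \oplus \stabilize{a}$ we at least have $\omega = \stabilize{q} \oplus \stabilize{a}$, making $\stabilize{a}$ a natural candidate. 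I expect this to be the hard part of the whole argument, because $A$ need not itself be self-framing; discharging it will require combining the \IDFAlgebra{} axioms relating $\stabilizeNa$, $\core{\_}$, and $\oplus$ (in particular $x = \stabilize{x} \oplus \core{x}$, cancellativity, and the unit-like behaviour of $\stabilize{\core{\_}}$) with the structural constraint that $A$ appears on the right of a self-framing separating conjunction.

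Finally, for \axNameFull{Seq}, applying the IH to $C_1$ at $\omega$ yields some $S_1$ with $\redStmt{C_1}{\omega}{S_1}$ and $S_1 \subseteq R$. Since the operational semantics preserves stability, each $\omega_1 \in S_1$ is stable, so invoking the IH for $C_2$ pointwise (using choice to assemble a function $\mathcal{S}$) we obtain $\redStmt{C_2}{\omega_1}{\mathcal{S}(\omega_1)}$ with $\mathcal{S}(\omega_1) \subseteq Q$ for every $\omega_1 \in S_1$. Applying \opNameFull{Seq} then yields $\redStmt{C_1; C_2}{\omega}{\bigcup_{\omega_1 \in S_1} \mathcal{S}(\omega_1)}$, and this union is contained in $Q$, completing the induction.
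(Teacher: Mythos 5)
Your overall route is the expected one (the paper only remarks that completeness is ``less involved'' than soundness and leaves the details to the mechanization): induction on the derivation of $\axiomSem{P}{C}{Q}$, matching each axiomatic rule against its operational counterpart. Your Skip, Assign, Havoc, If, Inhale and Seq cases are correct; in particular, using preservation of stability to re-enable the induction hypothesis after $C_1$ in the \axNameFull{Seq} case is exactly what is needed. The genuine gap is the one you flag yourself in the \axNameFull{Exhale} case, and the candidate you lean towards there is a dead end: there is no way to show $\stabilize{a} \in A$, because $A$ is an arbitrary semantic assertion and the premises of \axNameFull{Exhale} impose no self-framing (or any other) constraint on it. For instance, let $a$ be an unstable state with zero permission to \code{x.f} but knowledge of its value, $A = \{a\}$, and let $P = Q$ be the self-framing assertion asserting full permission to \code{x.f} with that value; then $P \models Q * A$ holds (since $q \oplus a = q$ for $q \in Q$), yet $\stabilize{a} \notin A$. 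So the identity $\omega = \stabilize{q} \oplus \stabilize{a}$ cannot be turned into an application of \opNameFull{Exhale} with $\stabilize{a}$ as the exhaled witness, and the ``structural constraint that $A$ appears on the right of a self-framing separating conjunction'' does not rescue it.

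The case closes once you stabilize only the $Q$-part and keep $a$ itself as the witness in $A$: with $\omega = q \oplus a$, $q \in Q$, $a \in A$ as you chose, take $\omega' \triangleq \stabilize{q}$ and $\omega_A \triangleq a$, and prove $\omega = \stabilize{q} \oplus a$ purely from the algebra, with no assumption on $A$. From $x = \stabilize{x} \oplus \core{x}$, distributivity of $\stabilizeNa$ and $\core{\_}$ over $\oplus$, and $\omega = \stabilize{\omega}$ (stability of $\omega$), one gets $\omega = \stabilize{q} \oplus \stabilize{a} \oplus \core{q} \oplus \core{a} = (\stabilize{q} \oplus a) \oplus \core{q}$, hence $\omega \succeq \stabilize{q} \oplus a$; conversely, $\stabilize{q} \oplus a = \stabilize{q} \oplus \stabilize{a} \oplus \core{a} = \stabilize{\omega} \oplus \core{a} = \omega \oplus \core{a} \succeq \omega$, and antisymmetry of $\succeq$ (which positivity guarantees) gives $\omega = \stabilize{q} \oplus a$. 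Since $\stabilize{q} \in Q$ by $\selfFraming{Q}$ and $\stable{\stabilize{q}}$ holds by axiom, \opNameFull{Exhale} yields $\redStmt{\vexhale{A}}{\omega}{\{\stabilize{q}\}}$ with $\{\stabilize{q}\} \subseteq Q$, as required. With this repair (and noting that for custom statements the corresponding completeness obligation is part of the \theIVL{} instantiation and may be assumed), your induction goes through.
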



\subsection{\ViperInst{}: Instantiating \theIVL{} with Viper}
\label{sec:viper-inst}

To show the practical usefulness of \theIVL{}, we instantiated it for the Viper language.
We call this instantiation \ViperInst{}, and we use it in \secref{sec:backends} and \secref{sec:frontends}.
To instantiate the framework presented in this section, one needs
(1) an \IDFAlgebra{},
(2) a type of custom statements $C'$,
(3) operational and axiomatic semantic rules for each custom statement,
and
(4) proofs that those operational and axiomatic semantic rules are compatible with our framework
(\ie soundness and completeness for the custom semantic rules, and a proof that the operational semantics of custom statements preserves stability).

We instantiate (1) with the \IDFAlgebra{} $\Sigma_{\mathit{IDF}}$ defined in \secref{subsec:idf-algebra}, where
the set $L$ of heap locations is the set of pairs of a reference and a field (represented by a string).
For (2), we add field assignments as $C' \Coloneqq (e_1.f := e_2)$, where $e_1$ and $e_2$ are semantic expressions that evaluate to a reference and a value, respectively, and $f$ is a field.
The field assignment $e_1.f := e_2$ is deterministic.
In an initial state $(\sigma, (h, \pi))$, it reduces to the singleton set $\{ (\sigma, (h[(r, f) \mapsto v], \pi)) \} $
if $e_1$ evaluates to a reference $r$, $e_2$ evaluates to a value $v$, and $\pi((r, f)) = 1$.
This semantics is reflected both in its corresponding operational and axiomatic semantic rules (3),
and the associated proofs (4) are straightforward.

Moreover, we have also connected the deep embedding of the Viper language developed by \citet{THSem}
(which we leverage in the next section) to \ViperInst{},
by defining a function $\IVLcompile{\stmt}$ that converts their \emph{syntactic} statements, expressions and assertions into \emph{semantic} \ViperInst{} statements, expressions and assertions.
\section{Back-End Soundness}
\label{sec:backends}

In this section, we show how our framework enables formalizing the soundness of different back-end verifiers.
We prove the soundness of two fundamentally different verification algorithms commonly used in practice: symbolic execution and verification condition generation.
We connect both to the \emph{same} instantiation of \theIVL{}, namely \ViperInst{} introduced in~\secref{sec:viper-inst}. This demonstrates that \theIVL{}'s semantics can accommodate fundamentally different verification algorithms.

Symbolic execution is a common kind of verification algorithm used in separation logic-based verifiers~\cite{BerdineCO05,VeriFast, GillianI}.
\secref{sec:symbolic-execution} introduces a symbolic execution back-end for \ViperInst{}.
Its design follows Viper's symbolic execution back-end~\cite{MalteThesis}, but it is formalized as a function inside \isabelle{}\@.
The main result of \secref{sec:symbolic-execution} is a soundness proof of this symbolic execution against the operational semantics of \ViperInst{}, showing how \theIVL{} is general enough to justify widely-used symbolic execution algorithms.

In \secref{sec:verif-cond-gener} we connect \ViperInst{} to the formalization by
\citet{THSem} of Viper's verification condition generation (VCG) back-end, which translates an input Viper program to Boogie.\footnote{This work provides a proof-producing version of Viper's VCG back-end that generates a certificate in Isabelle for each successful verification, but not a general soundness proof.} This formalization includes a formal operational semantics of Viper that we call \emph{\THSem{}}.
Unlike \ViperInst{}, which is designed to capture the verification algorithms of multiple back-ends, \THSem{} is specific to the verification algorithm of the VCG back-end.
For example, \THSem{} uses a total heap (\ie all possible locations on the heap store a value), while \ViperInst{} is based on a partial heap (which is important to capture existing symbolic execution algorithms). Moreover,
\THSem{} uses (constrained) \emph{demonic} choice when exhaling wildcard permissions, while \ViperInst{} uses angelic choice.
Despite these differences, we show that \ViperInst{}'s (and thus also \theIVL{}'s) operational semantics is general enough to capture \THSem{}, which embodies Viper's VCG back-end.

We have chosen these two back-ends since they implement very different proof search algorithms:
The symbolic execution algorithm manipulates a \emph{symbolic state} including a list of heap chunks, while the VCG back-end maps to Boogie code whose operations are embodied by \THSem{}, a big-step operational semantics with a total heap.
These back-ends show \theIVL{}'s generality for justifying multiple common verification algorithms.
A key aspect that enables this generality is \theIVL{}'s use of angelic choice.
Concretely, the two back-ends use different algorithms for exhaling wildcard permissions (the symbolic execution halves the permission of one heap chunk while \THSem{} \emph{demonically} chooses a suitably-constrained permission amount).
Yet, \theIVL{} can capture both algorithms thanks to its use of angelic choice.

\subsection{Symbolic Execution}
\label{sec:symbolic-execution}

\begin{figure}
	\small
  \centering
  \begin{align*}
  \SState : \SymState &\bnfdef \{\symstore : \Variable \rightharpoonup \SymExpr, \sympc : \SymExpr, \symheap : \ListTyp{\Chunk}\} \\
  \symexpr &\bnfdef \symvar \mid \literal \mid \unoperation~\symexpr \mid \symexpr \binoperation \symexpr \text{ where } \unoperation \in \{\neg, -, \ldots\} \text{ and } \binoperation \in \{\wedge, \vee, =, +, -, \ldots\} \\
  \chunk : \Chunk &\bnfdef \{\chunkrecv : \SymExpr, \chunkfield : \FieldName, \chunkperm : \SymExpr, \chunkval : \SymExpr\}
\end{align*}
  \begin{align*}
    \sexec~\SState~\stmt~\SCond &\eqdef \left\{
    \begin{aligned}
      &\sproduce~\SState~\assertion~\SCond & \text{ if } \stmt = \Inhale{\assertion}\\
      &\sconsume~\SState~\assertion~(\Lam \SState. \sstablize~\SState~\SCond) & \text{ if } \stmt = \Exhale{\assertion}\\
      &\sexp~\SState~\expr~(\Lam \SState~\symexpr. \sexec~\symaddcond{\SState}{\symexpr}~\stmt_1~\SCond & \text{ if } \stmt = (\SynIf{\expr}{\stmt_1}{\stmt_2})
      \\ &\quad \wedge \sexec~\symaddcond{\SState}{\neg\symexpr}~\stmt_2~\SCond)
     \\
      & \ldots &
    \end{aligned}\right.
  \end{align*}
  \[
    \sproduce~\SState~(\accmath{\expr_r}{\fieldname}{\expr_p})~\SCond \eqdef\sexp~\SState~\expr_r~(\Lam\SState~\symexpr_r.
        \sexp~\SState~\expr_p~(\Lam\SState~\symexpr_p.
      \symaddchunk{\SState}{\{\symexpr_r, \fieldname, \symexpr_p, \symfresh\}}~\SCond))
  \]
  \[
    \sconsume~\SState~(\accmath{\expr_r}{\fieldname}{\wildcardmath})~\SCond \eqdef
      \sexp~\SState~\expr_r~(\Lam\SState~\symexpr_r.
        \symextract{\SState}{\symexpr_r}{\fieldname}{\wildcardmath}~
            (\Lam\SState~\chunk. \symaddchunk{\SState}{\chunk\{\chunkperm :=\chunk.\chunkperm / 2\}}~\SCond))
          \]
  \caption{Symbolic states and excerpts of $\sexec$, $\sproduce$, and $\sconsume$.
The full definition is in \appendixref{sec:sexec-full}{A}.}
  \label{fig:sexec}
\end{figure}

We formalized a symbolic execution back-end for \ViperInst{} in \isabelle{} based on the description of Viper's back-end by \citet{MalteThesis} while also taking inspiration from the (on paper) formalization of symbolic execution by \citet{GradualViper}.

\paragraph{Symbolic states}
The symbolic state tracked during verification is defined in \figref{fig:sexec}. It consists of the following components:%
\footnote{For simplicity, we omit components for generating fresh symbolic variables and tracking type information.}
(1) A \emph{symbolic store} ($\symstore$) mapping variables to symbolic expressions,
(2) a \emph{path condition} ($\sympc$)---which is a symbolic expression tracking logical facts that hold in the current branch of the program---, and
(3) a \emph{symbolic heap} ($\symheap$) given by a list of heap chunks.
\emph{Symbolic expressions $\symexpr$} consist of symbolic variables $\symvar$, literals $\literal$ (\eg for concrete integers, booleans or permission amounts), unary operations $\unoperation~\symexpr$, and binary operations $\symexpr \binoperation \symexpr$.
We define a function $\symaddcond{\SState}{\symexpr}$ that adds the (boolean) symbolic expression $\symexpr$ to the path condition of $\SState$.

The most crucial part of symbolic states is the symbolic heap. As is common~\cite{Smallfoot, VeriFast, MalteThesis}, we represent the symbolic heap as a list of (heap) chunks.
Conceptually, each heap chunk corresponds to an $\accmath{\expr_r}{\fieldname}{\expr_p}$ resource, which we call an $\mathsf{acc}$-resource in this section, together with an associated value.
Concretely, a chunk $\chunk$ is a record with four fields, as shown in \figref{fig:sexec}.
$\chunkrecv$ and $\chunkfield$ describe the heap location that the chunk belongs to, $\chunkperm$ describes the permission of the chunk, and $\chunkval$ gives the (symbolic) value of the heap location.
A symbolic heap is a list of chunks. Note that this list can contain multiple chunks for the same location (\cf{} state consolidation, described shortly).

\paragraph{Defining the symbolic execution}
Our symbolic execution is defined via the $\sexec$ function for symbolically executing a statement $\stmt$. It delegates calls to: the $\sproduce$ function (for inhaling an assertion $\assertion$), the $\sconsume$ function (for exhaling an assertion $\assertion$), the $\sstablize$ function (for removing empty heap chunks after exhaling an assertion), and the $\sexp$ function (for symbolically evaluating an expression $\expr$). Each of these functions are formalized as functions in \isabelle{} and can be executed inside the prover to verify a concrete program.
The parts of these functions relevant to this chapter are shown in \figref{fig:sexec}. The full definition can be found in \appendixref{sec:sexec-full}{A}.
Following \citet{MalteThesis}, these functions are written in continuation passing style with continuation $\SCond$. This allows us to easily split the verification in multiple branches as shown \eg by the \textsf{if}-case of $\sexec$.
We now highlight the most important aspects of the symbolic execution.

\paragraph{Representing different state consolidation algorithms}
After inhaling an $\mathsf{acc}$-resource and adding it to the list of heap chunks, the symbolic execution might try to merge chunks for the same location and deduce additional information (\eg that for chunks of the same location their permissions sum to at most $1$ and their values match).
This process, called \emph{state consolidation}~\cite{MalteThesis}, is
incorporated into the $\symaddchunkH$ function, used to model inhaling an $\mathsf{acc}$-resource during $\sproduce$:
\[
  \symaddchunk{\SState}{\chunk}~\SCond\eqdef \symconsolidate{\SState\{\symheap:=\chunk::\SState.\symheap\}}~\SCond
\]
Since there are many possible ways to implement state consolidation~\cite{MalteThesis} (\eg merging chunks eagerly or lazily), we do not prescribe a specific implementation of the $\symconsolidateH$ function, but instead characterize
$\symconsolidateH$ \emph{semantically}:%
\footnote{The actual definition of $\symconsolidateH$ is slightly different to decouple the definition of the symbolic execution and \ViperInst{}.}
\[
  \symconsolidate{\SState}~\SCond\eqdef
  \All \IVLstate. \symrel{\IVLstate}{\SState} \Rightarrow \Exists \SState'.
  \symrel{\IVLstate}{\SState'} \wedge \SCond~\SState'
\]
Concretely, when executing $\symconsolidateH$, one is given a \ViperInst{} state $\IVLstate$ related to the current symbolic execution state $\SState$ (using the $\symrel{}{}$ relation) and one can pick an arbitrary new state $\SState'$ as long as it is related to the same \ViperInst{} state $\IVLstate$.
Intuitively, $\symrel{\IVLstate}{\SState}$ is defined by stating that there exists a mapping from symbolic variables to concrete values, which can be simply extended to a mapping from $\SState$ to $\IVLstate$.
The existential quantifier 
allows us to represent many different state consolidation algorithms. However, this generality also means that $\symconsolidateH$ cannot be executed directly. Instead, one can provide a concrete algorithm and prove it sound against $\symconsolidateH$ (our implementation uses the trivial algorithm that does not consolidate at all).
However, the soundness proof of our symbolic execution works for any valid consolidation algorithm.

\paragraph{Soundness}

We prove $\sexec$ sound against the operational semantics of \ViperInst{}:%
\footnote{We omit side-conditions about typing to avoid clutter.}
\begin{theorem}[Soundness of $\sexec$]
  \label{thm:sexec-sound}
  For each (syntactic) statement $\stmt$, \ViperInst{} state $\IVLstate$ and symbolic state $\SState$ related via $\symrel{\IVLstate}{\SState}$, if $\sexec~\SState~\stmt~\SCond$ evaluates to true, then $\IVLcompile{\stmt}$ is correct for the initial state $\IVLstate$.
\end{theorem}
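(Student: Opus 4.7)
The plan is to prove Theorem~\ref{thm:sexec-sound} by generalizing it to a family of mutually-recursive soundness lemmas over the helper functions $\sproduce$, $\sconsume$, $\sexec$, and $\sexp$, proceeding by structural induction on the input statement or assertion. Because these functions are written in continuation-passing style, the single statement about $\sexec$ is too weak to serve as an induction hypothesis: for each helper I would instead prove a claim of the shape ``whenever the helper succeeds with continuation $\SCond$ from a symbolic state $\SState$ related to a \ViperInst{} state $\IVLstate$ via $\symrel{\IVLstate}{\SState}$, there exists a set $S$ produced by the corresponding operational step from $\IVLstate$ such that every $\IVLstate' \in S$ is related to some symbolic state $\SState'$ on which $\SCond$ succeeds.'' This is a forward-simulation property indexed by the continuation; it composes across sequential constructs and yields the theorem after instantiating $\SCond$ with the trivial continuation.

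The individual cases then decompose as follows. The deterministic cases ($\sexec$ for assignments and skip, the propagation of $\sexp$ through operators) follow directly from $\symrel{}{}$ and the corresponding operational rules. Conditionals invoke the induction hypothesis on each branch after extending the path condition; the operational rules \opName{IfT} and \opName{IfF} then apply depending on the concrete truth value of the guard under $\IVLstate$, which is consistent with one of the two extended path conditions by soundness of symbolic evaluation. For $\sproduce$ of an $\mathsf{acc}$-resource, the freshly generated $\chunkval$ corresponds to one particular element of the demonic set produced by \opName{Inhale}, and the semantic characterisation of $\symconsolidateH$ directly supplies the post-consolidation symbolic state $\SState'$ needed to feed to the continuation.

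The main obstacle will be the wildcard exhale case. Operationally, $\vexhale{\wacc{e}}$ permits an angelic choice of an arbitrary positive decomposition of $\IVLstate$, while $\sconsume$ commits algorithmically to halving the $\chunkperm$ of a chunk selected by $\symextractH$. To bridge this gap, I must exhibit a specific angelic witness for \opName{Exhale}: the state fragment containing exactly half of the permission of the selected chunk at the concrete location that $\chunkrecv$ denotes under $\IVLstate$. Showing that this fragment is a valid $\oplus$-split of $\IVLstate$ and satisfies $\wacc{e}$ requires relating the concrete permission amounts stored in $\IVLstate$ to the symbolic $\chunkperm$, which is exactly the invariant encoded by $\symrel{}{}$. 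A further subtlety is that consolidation may have merged or reshuffled chunks between the relevant inhale and exhale, so the invariants of $\symrel{}{}$ must be strong enough to survive such manipulations; fortunately, the semantic specification of $\symconsolidateH$ guarantees that any concrete \ViperInst{} state related before consolidation remains related after. Finally, the $\sstablize$ call after exhale corresponds to the stability requirement on $\omega'$ in \opName{Exhale}, which follows because dropping empty-permission chunks in the symbolic heap mirrors the action of $\stabilizeNa{}$ on the underlying \IDFAlgebra{} state.
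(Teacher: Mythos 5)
Your proposal matches the paper's approach: the paper likewise proves soundness by a simulation between symbolic states and \ViperInst{} states via $\symrel{\IVLstate}{\SState}$ (generalized over the helper functions and their continuations, with the full details mechanized in Isabelle), relies on the semantic specification of $\symconsolidateH$ to preserve the relation, and resolves the wildcard-exhale case exactly as you describe, by using the angelic choice in \opName{Exhale} to remove precisely the (halved) permission amount the symbolic execution chose. Your observations about the $\sstablize$/stability correspondence and the continuation-indexed induction hypothesis are consistent with the paper's formalization, so no gap remains.
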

$\IVLcompile{\stmt}$ is the compilation function from syntactic statements to \ViperInst{} statements described in \secref{sec:viper-inst}.
The operational semantics of \theIVL{} is well-suited for this soundness proof since the symbolic execution also traverses the statements in an operational way, and it is straightforward to relate one \ViperInst{} state to one symbolic execution state via $\symrel{\IVLstate}{\SState}$.


\paragraph{Soundness of exhaling wildcards via angelic choice}
Let us highlight the most interesting part of this soundness proof: exhaling wildcards.
Exhaling assertions is handled by the $\sconsume$ function in \figref{fig:sexec}.
When exhaling an $\mathsf{acc}$-resource with a wildcard permission amount, $\sconsume$ finds and removes a matching chunk from the symbolic heap using the $\symextractH$ function.%
\footnote{Similar to $\symconsolidateH$, $\symextractH$ is characterized semantically and we provide a default implementation of $\symextractH$ that queries \isabelle{}'s solvers to find the first matching chunk.}
Then it adds the chunk back with its permission amount halved.
Representing this algorithm directly in \ViperInst{} would be impossible since there might be multiple heap chunks for the same location and thus
the amount of permissions removed depends on the structure of the symbolic heap.
This structure is not visible in \ViperInst{}, which tracks only a single concrete heap.
However, we can still prove this algorithm sound against \ViperInst{}. The angelic choice in the operational semantics allows us to pick \emph{any} non-zero permission amount to remove when constructing the \ViperInst{} execution, in particular, the amount that was chosen by the execution of $\sexec$.
This shows how angelic choice gives \theIVL{} the flexibility to be used in the soundness proof for different verification algorithms, even some that cannot be represented directly in the \theIVL{}.


\subsection{Verification Condition Generation}
\label{sec:verif-cond-gener}

We now describe how we connect the \ViperInst{} instantiation of \theIVL{} to the \THSem{}  formalization of Viper's VCG introduced by \citet{THSem}. \THSem{} is expressed as an operational big-step semantics $\THSemRed{\THSemCtx}{\THSemStmt}{\THSemState}{\THSemRes}$.
Here, $\THSemStmt$ is a (deeply embedded) Viper statement, $\THSemState$ the initial \THSem{} state consisting of a total heap (mapping all locations to values) and a permission mask (mapping all locations to permission amounts), and $\THSemRes$ is an \emph{outcome}, which can be either \emph{failure} $\THSemFail$, \emph{magic} $\THSemMagic$, or a \emph{normal outcome} $\THSemNormal{\THSemStateB}$.%
\footnote{We omit typing contexts in this section to avoid clutter.}
The key result of \citet{THSem} is that for each successful verification run of the VCG algorithm, they provide a proof that the \THSem{} execution does not fail:
$\neg (\THSemRed{\THSemCtx}{\THSemStmt}{\THSemState}{\THSemFail})$.

What makes the connection between \THSem{} and \ViperInst{} interesting is that
\THSem{} makes various design choices that are specific to the Viper back-end that it was designed to represent.
For instance, \THSem{} defines the \vexhaleNa{} of a wildcard to demonically remove a non-zero permission amount smaller than the currently held amount, which precisely mimics Viper's VCG.
Morever, \THSem{} chooses a total heap representation for the Viper states, where \emph{all} locations store a value (\THSem{} checks that only locations with non-zero permission are accessed), because this is how Viper's VCG back-end represents the heap.
In contrast, \ViperInst{} uses a more standard partial heap introduced in \secref{subsec:idf-algebra}.
By proving \THSem{} sound against \ViperInst{}, we show that \theIVL{} as a general semantics for verification algorithms can capture this preexisting verification algorithm.
The most significant challenge in the proof connecting \THSem{} and \ViperInst{} is the difference in their heap representations. We explain this challenge and our solutions next.


\paragraph{Total vs.\ partial heap}
The seemingly superficial difference between \THSem{}'s total heap and \ViperInst{}'s partial heap has far-reaching ramifications:
fundamentally it means that a \ViperInst{} execution does not correspond to a single \THSem{} execution but a \emph{set} of \THSem{} executions.

The reason for this mismatch is in the semantics of $\vexhaleNa{}$.
When exhaling all permissions to a location and later inhaling permissions to this location again, a Viper semantics needs to pick a \emph{fresh} value for the location such that one cannot unsoundly assume that the value remained unchanged between the $\vinhaleNa{}$ and the $\vexhaleNa{}$.
This requirement is naturally expressed with the partial heap of \ViperInst{}: when exhaling all permissions to a location in \ViperInst{}, the location is removed from the partial heap and when new permissions for the location are inhaled, it gets re-added with  a (non-deterministically chosen) fresh value.
However, since \THSem{} uses a total heap, it cannot \emph{remove} locations.
Instead, \THSem{} non-deterministically assigns these locations new values \emph{after the $\vexhaleNa{}$} and leaves the heap unchanged in the $\vinhaleNa{}$.
Consequently, \THSem{} and \ViperInst{} apply (demonic) non-deterministic choice at different program points:
\THSem{} already picks a fresh value during the $\vexhaleNa{}$, while \ViperInst{} chooses it during the $\vinhaleNa{}$.
To address this mismatch\footnote{The mismatch could also be addressed by changing \THSem{} to assign a fresh value during $\vinhaleNa{}$. However, our goal is to capture the verification algorithms of \emph{existing} back-ends.}, we relate a \ViperInst{} execution not to a single \THSem{} execution but to a set of \THSem{} executions that represent all possible choices for the demonic non-determinism.

\paragraph{Soundness}
We prove the following soundness statement for \THSem{}:%
\footnote{For readability, we omit some technical assumptions about stability of $\IVLstate$ and well-typedness.}
\begin{theorem}[Soundness of \THSem{}]
  \label{thm:thsem-sound}
  For all (syntactic) statements $\THSemStmt$ and \ViperInst{} states $\IVLstate$, if we have $\neg (\THSemRed{\THSemCtx}{\THSemStmt}{\THSemState}{\THSemFail})$ for all \THSem{} states $\THSemState$ such that $\vcgrel{\IVLstate}{\THSemState}$, then $\IVLcompile{\THSemStmt}$ is correct for the state $\IVLstate$.
\end{theorem}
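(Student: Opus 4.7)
The plan is to proceed by structural induction on the syntactic statement $\THSemStmt$, building a \ViperInst{} execution $\redStmt{\IVLcompile{\THSemStmt}}{\IVLstate}{S}$ from the hypothesis that no \THSem{} execution starting from a related state fails. The key conceptual device is to strengthen the statement into a \emph{family-based simulation}: for every $\IVLstate$, we show that whenever all $\THSemState$ with $\vcgrel{\IVLstate}{\THSemState}$ satisfy $\neg(\THSemRed{\THSemCtx}{\THSemStmt}{\THSemState}{\THSemFail})$, we can exhibit a set $S$ of \ViperInst{} successors such that $\redStmt{\IVLcompile{\THSemStmt}}{\IVLstate}{S}$ and such that each $\IVLstateB \in S$ is related (via $\sim_{\mathsf{\THSemShort{}}}$) to \emph{at least one} non-failing \THSem{} successor reachable from \emph{some} $\THSemState$ with $\vcgrel{\IVLstate}{\THSemState}$. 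This invariant is what will let the induction carry through sequential composition.

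First I would establish two auxiliary lemmas about $\vcgrel{\cdot}{\cdot}$: (i)~every \ViperInst{} state $\IVLstate$ admits at least one related $\THSemState$ (by picking arbitrary values for locations outside the partial heap's domain), and (ii)~for a \ViperInst{} successor $\IVLstateB$ of $\IVLstate$ obtained by inhaling, havocking, or assigning, any \THSem{} successor of a related $\THSemState$ obtained by the corresponding \THSem{} step (with the appropriate fresh values chosen at the differing point in time) is related to $\IVLstateB$. The cases of the induction that only manipulate the store (\textsf{assign}, \textsf{skip}, \textsf{havoc}) and the conditional are routine once these correspondences are established. For $\vinhaleNa{}$, demonic non-determinism matches up directly: the set of demonically-chosen values for a newly-inhaled location in \ViperInst{} corresponds exactly to the fresh values that \THSem{} had already demonically chosen during the earlier $\vexhaleNa{}$.

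The interesting cases are $\vexhaleNa{}$ and sequential composition. For $\vexhaleNa{A}$, we use \opNameFull{Exhale}'s angelic choice: pick any $\THSemState$ related to $\IVLstate$ (exists by lemma~(i)) and look at a non-failing \THSem{} execution of the exhale; the choice \THSem{} made for witnessing $A$ (in particular, the precise non-zero amount chosen for a wildcard) can be lifted to a decomposition $\IVLstate = \IVLstate' \oplus \omega_A$ with $\omega_A \in \llbracket A \rrbracket$, and after \textsf{stabilize} this gives the \ViperInst{} successor. This is exactly where angelic non-determinism in our operational semantics makes the two otherwise incompatible wildcard strategies agree, as emphasised in \secref{sec:symbolic-execution}. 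For $C_1; C_2$, we apply the induction hypothesis to $C_1$ to obtain a set $S_1$, and then for each $\omega_1 \in S_1$ we re-invoke the induction hypothesis on $C_2$: this works because our strengthened invariant guarantees that every $\omega_1 \in S_1$ is related to some non-failing \THSem{} mid-state, so the non-failure hypothesis of $C_2$ still holds under $\sim_{\mathsf{\THSemShort{}}}$, providing the function $\mathcal{S}$ required by \opNameFull{Seq}.

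The hardest part will be getting the invariant right in the presence of the fresh-value timing mismatch: in \THSem{} the fresh value for a re-inhaled location is committed at the preceding exhale, while in \ViperInst{} it is chosen at the inhale itself, so a naive one-\ViperInst{}-state-to-one-\THSem{}-state simulation breaks for programs of the form $\vexhale{\acc{\code{x.f}}}; \vinhale{\acc{\code{x.f}}}$. The family-based invariant above resolves this by quantifying over all related \THSem{} states throughout, and I expect the main technical work in the Isabelle mechanisation to be in stating and maintaining this invariant cleanly---in particular, showing that the set $S$ we construct is closed under the correspondence and thus reusable as the hypothesis for the next inductive step.
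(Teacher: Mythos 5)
Your overall strategy (structural induction with a strengthened, family-based statement relating one \ViperInst{} execution to a \emph{set} of \THSem{} executions) has the same shape as the paper's argument, but the invariant you strengthen to is too weak, and the weakness hits exactly at the step you rely on it for. You require only that each $\IVLstateB \in S$ be related to \emph{at least one} non-failing \THSem{} successor reachable from \emph{some} related initial $\THSemState$. To re-invoke the induction hypothesis on $C_2$ at a mid-state $\omega_1 \in S_1$, however, you must discharge the hypothesis of the theorem for $\omega_1$, which is universally quantified: $\neg(\THSemRed{\THSemCtx}{\THSemStmtB}{\THSemStateB}{\THSemFail})$ for \emph{all} $\THSemStateB$ with $\vcgrel{\omega_1}{\THSemStateB}$. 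Knowing that one related mid-state does not fail says nothing about the others, and since $\vcgrel{}{}$ relates a partial-heap \ViperInst{} state to the whole family of total-heap extensions (arbitrary values at unowned locations), those others are genuinely distinct states. What you actually need --- and what the stronger lemma sketched in the paper provides --- is a \emph{coverage} property: every \THSem{} state related to each $\omega_1 \in S_1$ must be reachable by a (necessarily non-failing) \THSem{} execution of $C_1$ from some initial $\THSemState$ related to $\IVLstate$; that is, you must construct \emph{all} possible \THSem{} executions corresponding to the \ViperInst{} execution, not exhibit one. With coverage, non-failure of $C_1;C_2$ from every related initial state transfers, via the big-step sequencing rule of \THSem{}, to non-failure of $C_2$ from every state related to $\omega_1$, and the induction hypothesis applies.

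The same quantifier problem resurfaces in your \vinhaleNa{} case: for the demonic choice of a fresh value at an inhale in \ViperInst{} to be matched by the value \THSem{} already committed at the earlier \vexhaleNa{}, you need \emph{every} total-heap extension of the pre-inhale \ViperInst{} state (one per candidate fresh value) to be among the reachable \THSem{} states; ``related to some reachable successor'' does not give you that. Your closing remark about $S$ being ``closed under the correspondence'' gestures at the right fix, but as stated your induction invariant is existential where it must be universal, so the sequential-composition and inhale cases do not go through.
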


Intuitively, this theorem allows us to transform a proof about a successful verification by the VCG back-end into a verification proof according to the \ViperInst{} semantics.
\michael{Note that the theorem relates \peter{a} single \ViperInst{} execution 
to a \emph{set} of \THSem{} executions 
since the relation $\vcgrel{\IVLstate}{\THSemState}$ relates a \ViperInst{} state $\IVLstate$ to multiple \THSem{} states $\THSemState$ representing the different choices for the \peter{demonic} non-determinism.
(Otherwise, $\vcgrel{\IVLstate}{\THSemState}$ is similar to $\symrel{\IVLstate}{\SState}$ from \secref{sec:symbolic-execution}, but adapted for the different notion of states used by \THSem{}.)
}
\mout{
Note that the theorem relates the \ViperInst{} state to a \emph{set} of \THSem{} states.}
In fact, to prove \thmref{thm:thsem-sound} via induction, we need to prove a stronger lemma that also requires us to construct all possible \THSem{} executions for the statement corresponding to the \ViperInst{} execution.

\paragraph{Summary}
We have demonstrated in this section how \theIVL{}'s operational semantics helps us solve Challenge 2,
by being
general enough to capture the two predominant verification algorithms back-ends implemented in practice: our new formalization of symbolic execution in \secref{sec:symbolic-execution} and the preexisting formalization of Viper's VCG back-end~\cite{THSem} in \secref{sec:verif-cond-gener}.

\section{Front-End Soundness}
\label{sec:frontends}

\newcommand{\parImp}{ParImp}

In this section, we show how our axiomatic semantics addresses Challenge 3 from \secref{sec:introduction},
by formalizing and proving sound a concrete front-end translation into \ViperInst{} for a parallel programming language \parImp{} with
loops, shared memory, and dynamic memory allocation and deallocation.
We define the language and an IDF-based program logic in  \secref{subsec:csl}.
In \secref{subsec:translation}, we define the translation of annotated \parImp{} programs into \ViperInst{} and prove it sound using the axiomatic semantics of \ViperInst{}.
While the soundness proof is specific to this translation, it highlights key reusable ingredients and demonstrates how our axiomatic semantics for \theIVL{} makes such proofs simple.

\subsection{An IDF-Based Concurrent Separation Logic}
\label{subsec:csl}

Our parallel programming language \parImp{} is defined as%
\def\manskip{\kern-0.8pt}
{%
\footnotesize
$$
C \Coloneqq
x\manskip:=\manskip{}e \mid
x\manskip:=\manskip{}r.v \mid
r.v\manskip:=\manskip{}e \mid
r\manskip:=\manskip\fenew{e} \mid
\fefree{r} \mid
C;\manskip{}C \mid
\feif{b}{C}{C} \mid
C\manskip\parcomp\manskip{}C \mid
\fewhile{b}{C} \mid
\feskip
$$
}
$C$ ranges over \parImp{} statements, $e$ over arithmetic expressions, $b$ over boolean expressions, $x$ over integer variables, $r$ over reference variables, and $v$ is a fixed field.
We consider objects with a unique field $v$ for simplicity; extending our work to support multiple fields is straightforward.
The statement $x := r.v$ loads the value of the field $v$ of the reference $r$ into the variable $x$,
while $r.v := e$ stores the value of the expression $e$ in the field $v$ of the reference $r$.
The statement $r := \fenew{e}$ allocates a new reference with the value of the expression $e$ for the field $v$,
and $\fefree{r}$ deallocates the reference $r$.
The other statements are standard.
We use a standard small-step semantics
$\langle C, \sigma \rangle \rightarrow \langle C', \sigma' \rangle$
where $\sigma$ and $\sigma'$ are pairs of a store (a partial mapping from variables to values) and a heap (a partial mapping from pairs of an address and a field to values).

\begin{figure}
\footnotesize
\begin{mathpar}
    \inferhref{Frame}{Frame2}
    {\CSL{P}{C}{Q} \\ \selfFraming{P} \\ \selfFraming{F} \\
    \freevars{F} \cap \modvars{C} = \emptyset}
    {\CSL{P * F}{C}{Q * F}}
    \and
    \inferhref{Par}{Par3}
    {
    \modvars{C_l} \cap (\freevars{C_r} \cup \freevars{Q_r}) = \varnothing \\
    \modvars{C_r} \cap (\freevars{C_l} \cup \freevars{Q_l}) = \varnothing \\
    \CSL{P_l}{C_l}{Q_l} \\ \CSL{P_r}{C_r}{Q_r} \\
    \selfFraming{P_l} \\ \selfFraming{P_r}
    }
    {\CSL{P_l * P_r}{C_l \; || \; C_r}{Q_l * Q_r}}
    \and
    \inferhref{Seq}{Seq2}
    {\CSL{P}{C_1}{R} \\ \CSL{R}{C_2}{Q}}
    {\CSL{P}{C_1;C_2}{Q}}
    \and
    \inferhref{Cons}{Cons2}
    {\CSL{P'}{C}{Q'} \\ P \models P' \\ Q' \models Q}
    {\CSL{P}{C}{Q}}
    \and
    \inferhref{If}{If}
    {\CSL{P \land b}{C_1}{Q} \\ \CSL{P \land \neg b}{C_2}{Q}}
    {\CSL{P}{\feif{b}{C_1}{C_2}}{Q}}
    \and
    \inferhref{Alloc}{Alloc}
    {r \notin \freevars{e}}
    {\CSL{\top}{\code{r} := \fenew{\code{e}}}{ \acc{\code{r.v}} * \code{r.v} = \code{e}}}
    \and
    \inferhref{While}{While}
    {\CSL{I \land b}{C}{I}}
    {\CSL{I}{\fewhile{b}{C}}{I \land \neg b}}
    \and
    \inferH{Load}
    {P \models \wacc{\code{r.v}}}
    {\CSL{P}{\code{x := r.v}}{\exists u \ldotp \substitution{P}{u}{x} * \code{x} = \code{r.v}}}
%
    %
    \and
    \inferhref{Store}{Store}{}
    {\CSL{\acc{\code{r.v}}}{\code{r.v := e}}{ \acc{\code{r.v}} * \code{r.v} = \code{e} }}
    \and
    \inferhref{Free}{Free}{}
    {\CSL{\acc{\code{q.v}}}{\fefree{\code{q}}}{ \top }}
    \and
    \inferH{Assign}{}
    {\CSL{\substitution{P}{x}{e}}{\code{x := e}}{P}}
    \and
    \inferH{Skip}{}
    {\CSL{P}{\feskip}{P}}
\end{mathpar}
\caption{Inference rules of our IDF-based CSL.}
\label{fig:csl}
\end{figure}

\paragraph{An IDF-based program logic for \parImp{}}
We build and prove sound a program logic analogous to CSL for ParImp based on our IDF state model $\Sigma_{\mathit{IDF}}$ (defined in \secref{subsec:idf-algebra}). Our framework also supports standard separation logic, but connecting an IDF logic to \ViperInst{} allows us to focus on the most interesting aspects of the soundness proof.

Our program logic judgment is written $\CSL{P}{C}{Q}$, where $P$ and $Q$ are \ViperInst{} assertions (\ie sets of IDF states).
The most important rules of our program logic are given in \figref{fig:csl}.
The rules \ruleref{Seq2}, \ruleref{Cons2}, \ruleref{If}, \ruleref{While}, \ruleref{Free},
\ruleref{Assign}, and \ruleref{Skip}, are standard.
The rules \ruleref{Alloc}, \ruleref{Store}, \ruleref{Load}, \ruleref{Frame2}, and \ruleref{Par3} are analogous to the standard CSL rules, but adapted to our IDF setting.
In particular, the rule \ruleref{Frame2} requires the precondition $P$ and the frame $F$ to be self-framing.
Without this restriction, one could for example use $P \triangleq (\acc{\code{r.v}})$ and $F \triangleq (\code{r.v} = 5)$ to unsoundly derive the invalid triple
$\CSL{(\acc{\code{r.v}}) * (\code{r.v} = 5)}{\code{r.v := 3}}{(\acc{\code{r.v}} * \code{r.v} = 3) * (\code{r.v} = 5)}$
(whose postcondition is not satisfiable)
using the rules \ruleref{Frame2} and \ruleref{Store}.
Similarly, the rule \ruleref{Par3} requires the preconditions $P_l$ and $P_r$ to be self-framing.
Finally, the rule \ruleref{Load} allows arbitrary preconditions $P$, as long as $P$ asserts some permission to read \code{r.v}.

We have proved in Isabelle the soundness of this IDF-based program logic, which we state as follows
(the proof of this theorem is an adaption of the proof from \citet{VafeiadisCSLSoundness} to our IDF setting):
\begin{theorem}[Adequacy]\label{thm:adequacy}
    Let $C$ be a well-typed program, and $P$ and $Q$ be predicates on \parImp{} states (\ie without permissions).
    If the triple $\CSL{ P }{C}{Q}$ holds,
    and if $\sigma$ is a well-typed state such that $P(\sigma)$,
    then executing $C$ in the state $\sigma$ will not abort nor encounter any data race, and
    for all $\sigma'$ such that $\langle C, \sigma \rangle \rightarrow^* \langle \feskip, \sigma' \rangle$, we have $Q(\sigma')$.
\end{theorem}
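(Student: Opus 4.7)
The plan is to adapt the standard adequacy proof for CSL by \citet{VafeiadisCSLSoundness} to our IDF setting. First, I would introduce a semantic notion of safety $\mathsf{safe}_n(C, \omega, Q)$ asserting that, for up to $n$ steps of the \parImp{} semantics starting from the stable IDF state $\omega$: $C$ does not abort; $C$'s next step does not race against the permissions tracked in $\omega$; if $C$ has reduced to $\feskip$ the resulting underlying \parImp{} state satisfies $Q$; and every reduction $\langle C, \sigma \rangle \rightarrow \langle C', \sigma' \rangle$ of the underlying \parImp{} configuration lifts to some IDF state $\omega'$ with $\mathsf{safe}_{n-1}(C', \omega', Q)$. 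I would then define semantic validity of $\CSL{P}{C}{Q}$ to mean: for every $n$, every $\omega \models P$ with $\stable{\omega}$, and every $\omega_F$ such that $\omega \oplus \omega_F$ is defined, stable, and $\framedBy{\{\omega_F\}}{P}$ holds, $\mathsf{safe}_n(C, \omega \oplus \omega_F, Q * \{\omega_F\})$ holds. Closing under frames up front makes the frame-rule case immediate.

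Next, I would prove by induction on the derivation that $\CSL{P}{C}{Q}$ implies its semantic validity. The atomic rules (Store, Alloc, Free, Load, Assign, Skip) reduce to case-by-case checks that the operational semantics of each primitive matches its specification and respects the ownership discipline; notably, the Load case uses $P \models \wacc{r.v}$ to obtain a non-zero permission on $r.v$, which both rules out a race and ensures the value is present in the underlying heap. The structural rules (Seq, Cons, If, While) are handled by routine step-indexed unfolding, and the frame-rule case is immediate from the frame closure built into the semantic judgment; the self-framing side-conditions on $P$ and $F$ are exactly what is needed to instantiate the frame cleanly.

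The main obstacle is the parallel composition rule. Starting from $\omega_l \oplus \omega_r \oplus \omega_F$ with $\omega_l \models P_l$, $\omega_r \models P_r$, and appropriate framing of $\omega_F$, I would simulate each interleaved step of $C_l \parcomp C_r$: a step of $C_l$ is matched by invoking the induction hypothesis on $C_l$ with the combined frame $\omega_r \oplus \omega_F$ (and symmetrically for $C_r$). Two IDF-specific subtleties arise. First, because IDF assertions can constrain values without owning them, I must ensure $\omega_r \models P_r$ is preserved across $C_l$'s steps; this follows from self-framing of $P_r$ (a side-condition of the rule) combined with the disjointness of permissions captured by $\oplus$, which prevents $C_l$ from modifying any location that $\omega_r$ asserts information about. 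Second, assembling $\omega_r \oplus \omega_F$ into a well-formed frame for $P_l$ requires chaining the self-framing of $P_r$ with the framing hypothesis on $\omega_F$. Data-race freedom at the configuration level follows because, at every step, $C_l$ and $C_r$ touch only disjoint locations.

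Finally, to conclude the theorem I would lift the initial \parImp{} state $\sigma$ to a stable IDF state $\omega$ by assigning full permission to every location of $\sigma$'s heap. Since $P$ does not mention permissions, $\omega \models P$; instantiating semantic validity at $\omega$ with the empty frame and arbitrary $n$ delivers absence of aborts, absence of data races, and $Q$ on every terminal \parImp{} state, as required.
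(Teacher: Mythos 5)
Your overall plan --- a step-indexed safety predicate plus an induction over the derivation, adapted from \citet{VafeiadisCSLSoundness} to the IDF state model, and a final lifting of the initial \parImp{} state to a full-permission stable state --- is exactly the route the paper takes (its Isabelle proof is explicitly an adaptation of Vafeiadis's argument). However, there is a genuine gap in how you set up the semantic judgment: you quantify over the frame $\omega_F$ \emph{once, at the top level} of semantic validity, and your $\mathsf{safe}_n$ step clause merely says that each reduction of the underlying configuration ``lifts to some IDF state $\omega'$''. This is too weak to carry out the parallel-composition case. In Vafeiadis-style proofs the frame is quantified \emph{inside} the step clause of $\mathsf{safe}_n$: for every frame $\omega_F$ compatible with the current footprint $\omega$, if the program steps in the combined state $\omega \oplus \omega_F$, then the resulting state must split as $\omega' \oplus \omega_F$ with the \emph{same} frame and $\mathsf{safe}_{n-1}$ holding of $\omega'$. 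This frame-preservation-per-step is precisely what lets you maintain the invariant $\omega_l \oplus \omega_r \oplus \omega_F$ while simulating $C_l \parcomp C_r$.

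With your formulation the Par case breaks down as follows. You instantiate the semantic validity of the premise for $C_l$ with the combined frame $\omega_r \oplus \omega_F$, obtaining $\mathsf{safe}_n(C_l,\ \omega_l \oplus \omega_r \oplus \omega_F,\ Q_l * \{\omega_r \oplus \omega_F\})$. When $C_l$ takes a step, all this gives you is \emph{some} $\omega'$ with $\mathsf{safe}_{n-1}(C_l', \omega', Q_l * \{\omega_r \oplus \omega_F\})$: you have lost the decomposition of $\omega'$ into $C_l$'s residual footprint, the untouched $\omega_r$, and $\omega_F$, so you can neither re-establish $C_r$'s safety on its own part nor argue race freedom (which needs each thread's accesses confined to its own footprint at every step). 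Worse, $C_l$'s safety was fixed relative to the one ambient state $\omega_r \oplus \omega_F$, so after $C_r$ later modifies its own part you have no safety statement for $C_l$ in the changed environment; the per-step internal frame quantification is what makes $C_l$'s safety robust against the other thread's interference. Your ``closing under frames up front'' does make the \ruleref{Frame2} case easy, but it does not help \ruleref{Par3}. The fix is to adopt the standard definition with frame-preserving steps baked into $\mathsf{safe}_n$ (and abort/access clauses stated relative to arbitrary extensions), after which your treatment of the IDF-specific points (self-framing of $P_r$, $Q_r$, stability, and the final full-permission erasure argument) goes through as you describe.
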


\subsection{A Sound Front-End Translation}
\label{subsec:translation}

\begin{figure}
\footnotesize
\begin{equation*}
    \begin{aligned}
        &\translate{r := \fenew{e}} &&\triangleq ((\vhavoc{r}; \vinhale{\acc{r.v} * r.v = e}), \varnothing) \\
        &\translate{ \fefree{r} } &&\triangleq (\vexhale{\acc{r.v}}, \varnothing) \\
        &\translate{ C_1; C_2 } &&\triangleq ((\translate{C_1}.1; \translate{C_2}.1), (\translate{C_1}.2 \cup \translate{C_2}.2)) \\
        &\translate{ \feif{b}{C_1}{C_2} } &&\triangleq (\vif{b}{\translate{C_1}.1}{\translate{C_2}.1}), (\translate{C_1}.2 \cup \translate{C_2}.2))
    \end{aligned}
    \quad
    \quad
    \begin{aligned}
        &\translate{\feskip} &&\triangleq (\vskipp, \varnothing) \\
        &\translate{x := e} &&\triangleq (x := e, \varnothing) \\
        &\translate{r.v := e} &&\triangleq (r.v := e, \varnothing) \\
        &\translate{x := r.v} &&\triangleq (x := r.v, \varnothing)
    \end{aligned}
\end{equation*}
\begin{align*}
    &\translate{ C_l \parcomp C_r } &\triangleq \; &(
        (\vexhale{P_l * P_r}; \vhavoc{\modvars{C_l} \cup \modvars{C_r}}; \vinhale{Q_l * Q_r}), \\
        &&&\{ \vinhale{P_l}; \translate{C_l}.1; \vexhale{Q_l} \} \cup \{ \vinhale{P_r}; \translate{C_r}.1; \vexhale{Q_r} \} \cup
        \translate{C_l}.2 \cup \translate{C_r}.2
    ) \\
    &\translate{ \fewhile{b}{C} } &\triangleq \; &(
        (\vexhale{I}; \vhavoc{\modvars{C}}; \vinhale{I \land \lnot b}), \\
        &&&\{ \vinhale{I \land b}; \translate{C}.1; \vexhale{I} \} \cup \translate{C}.2
    )
\end{align*}
\caption{Front-end translation from \parImp{} to \ViperInst{}.
The translation function $\translate{\_}$ takes as input an annotated \parImp{} statement $C$
and returns a pair of a \ViperInst{} statement and a set of \ViperInst{} statements.
We write $\fst{\translate{C}}$ and $\snd{\translate{C}}$ to denote its first and second components, respectively.
Assertions $P_l$, $P_r$, $Q_l$, and $Q_r$ for the parallel composition
and $I$ for the while loop are annotations provided by the user, which are all required to be self-framing.
The notation $\vhavoc{V}$, where $V$ is a set of variables $\{ x_1, \ldots, x_n \}$, is a shorthand
for $\vhavoc{x_1}; \ldots; \vhavoc{x_n}$.}
\label{fig:translation}
\end{figure}

Building on the previously-defined IDF-based program logic,
we define a standard front-end translation from \parImp{} programs with annotations into \ViperInst{} programs, shown in \figref{fig:translation}.
This translation was illustrated in the example in \figref{fig:running-example} from \secref{sec:key-ideas}.
The translation function $\translate{\_}$ takes as input an annotated \parImp{} statement $C$
and yields a \emph{pair of a \ViperInst{} statement and a set of \ViperInst{} statements}.
The first component, written $\fst{\translate{C}}$, corresponds to the main translation of $C$,
while the second component, written $\snd{\translate{C}}$, corresponds to the set of
auxiliary Viper methods generated by the translation along the way.
Auxiliary methods are generated for loops and parallel compositions only.
Methods \code{l} and \code{r} in \figref{fig:running-example} are examples of such auxiliary methods.

The translation of field and variable assignments is straightforward.
The translation of sequential composition and conditional statements is also straightforward since they use the corresponding sequential composition and conditional statements of \ViperInst{}, and collect the auxiliary methods generated by the translation of the sub-statements.
The translation of allocation and deallocation statements corresponds to the rules \ruleref{Alloc} and \ruleref{Free} from \figref{fig:csl}.

The translation of parallel composition and while loops is more involved, but they follow the same pattern.
First, the premises of the relevant rules (\ruleref{Par3} and \ruleref{While}) are checked by generating auxiliary methods, which first inhale the relevant precondition, then translate the relevant statement, and finally exhale the relevant postcondition.
For example, the premise
$\CSL{I \land b}{C}{I}$
of the rule \ruleref{While}
is checked by generating the auxiliary method $\vinhale{I \land b}; \translate{C}.1; \vexhale{I}$.
We call this pattern the \emph{inhale-translation-exhale} pattern.
Then, the main translation follows the conclusion of the rule, by exhaling the precondition, havocking the modified variables, and inhaling the postcondition.
For example, the main translation of the loop $\fewhile{b}{C}$ is $(\vexhale{I}; \vhavoc{\modvars{C}}; \vinhale{I \land \lnot b})$, reflecting the conclusion
$\CSL{I}{\fewhile{b}{C}}{I \land \lnot b}$
of the rule \namerule{While}.
We call this pattern,
which we have already seen in \secref{subsec:ax-semantics-key},
the \emph{exhale-havoc-inhale} pattern.
Those two patterns are not specific to our translation, but are general patterns that can be found in many front-end translations.

\paragraph{Soundness}
We assume that the \parImp{} statement $C$ we want to verify is annotated with a precondition $P$ and a postcondition $Q$.
In this case, we add $\vinhale{P}$ before the main translation (as we did in \figref{fig:running-example}), and $\vexhale{Q}$ afterwards,
following the inhale-translation-exhale pattern.
Our complete front-end translation yields the set of ViperCore statements $\{ \vinhale{P}; \fst{\translate{C}}; \vexhale{Q} \} \cup \snd{\translate{C}}$.
Our translation is sound, as stated in the following theorem.
We say that a ViperCore statement $C_v$ is valid \emph{w.r.t.\ the axiomatic semantics},
which we write $\mathit{valid}_{Ax}(C_v)$, iff $\exists B \ldotp \axiomSem{\top}{C_V}{B}$.

\begin{theorem}[Soundness of the front-end translation]\label{thm:fe-soundness}
    Let $C$ be a front-end statement, and $P$ and $Q$ be assertions.
    If (1) the axiomatic semantics triple $\axiomSem{P}{\fst{\translate{C}}}{Q}$ holds,
    and (2) all \thibault{ViperCore} statements in $\translate{C}.2$ are valid \wrt{} the axiomatic semantics,
    then $\CSL{P }{C}{Q }$ holds.
\end{theorem}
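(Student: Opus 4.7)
The plan is to prove the theorem by structural induction on the ParImp statement $C$, carrying assumptions (1) and (2) as the inductive invariant, and combining the axiomatic semantics of \ViperInst{} with the CSL rules in \figref{fig:csl}. The high-level strategy is to exploit the two stylized patterns identified in \secref{subsec:translation}: for each constructor whose translation fits the \emph{exhale-havoc-inhale} shape, I would discharge the case using \lemref{lem:exhale-inhale-key}; for each auxiliary method, whose shape is \emph{inhale-translation-exhale}, I would prove a dual lifting lemma to convert axiomatic validity into a CSL triple for the corresponding sub-statement.

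First, I would dispatch the straightforward cases. For \feskip{}, variable assignment, field assignment, and field load, whose translations are atomic IVL statements, I would invert the axiomatic rule supplied by (1) and apply the matching CSL rule together with \ruleref{Cons2}. For $\fenew{}$ and \fefree{}, whose translations are already in exhale-havoc-inhale form, I would apply \lemref{lem:exhale-inhale-key} with \ruleref{Alloc} or \ruleref{Free} as the source triple. For sequential composition and conditional, I would invert the axiomatic rules \axNameFull{Seq} and \axNameFull{If} to obtain intermediate assertions, partition $\snd{\translate{C}}$ according to sub-statements, apply the induction hypothesis, and combine with \ruleref{Seq2} or \ruleref{If}.

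The interesting cases are parallel composition and while loops. For $C_l \parcomp C_r$, the plan is: (i) from the axiomatic validity of the auxiliary methods $\vinhale{P_l}; \fst{\translate{C_l}}; \vexhale{Q_l}$ and $\vinhale{P_r}; \fst{\translate{C_r}}; \vexhale{Q_r}$ in $\snd{\translate{C_l \parcomp C_r}}$, derive the CSL triples $\CSL{P_l}{C_l}{Q_l}$ and $\CSL{P_r}{C_r}{Q_r}$; (ii) apply \ruleref{Par3} to obtain $\CSL{P_l * P_r}{C_l \parcomp C_r}{Q_l * Q_r}$; (iii) apply \lemref{lem:exhale-inhale-key} to the top-level exhale-havoc-inhale of $\fst{\translate{C_l \parcomp C_r}}$ to lift this triple to the actual precondition $P$ and postcondition $Q$ supplied by assumption (1). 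The while-loop case is analogous, using \ruleref{While} for one iteration and again \lemref{lem:exhale-inhale-key} to lift to $(P, Q)$.

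The main obstacle is step (i), the dual to \lemref{lem:exhale-inhale-key}: establishing that axiomatic validity of $\vinhale{P'}; \fst{\translate{C}}; \vexhale{Q'}$ together with validity of $\snd{\translate{C}}$ yields $\CSL{P'}{C}{Q'}$. After successive inversions of \axNameFull{Seq}, \axNameFull{InhaleAx} and \axNameFull{ExhaleAx}, one obtains some intermediate $A$ with $\axiomSem{P'}{\fst{\translate{C}}}{A}$ and some $D$ with $A \models D * Q'$; the induction hypothesis of the main theorem then delivers $\CSL{P'}{C}{A}$. The delicate point is closing the gap from $A$ to $Q'$: this requires exploiting the self-framing side conditions carried by \axNameFull{InhaleAx} and \axNameFull{ExhaleAx} and applying \ruleref{Frame2} and \ruleref{Cons2} with frame $D$, so that the ``extra'' resources that $D$ describes can be framed away soundly. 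Once this lifting is proved and shown stable under the structural induction (which requires the assumption on $\snd{\translate{C}}$ to be inherited by sub-statements through the union in the definition of $\translate{\_}$), the inductive cases for \parcomp{} and \fewhileNa{} compose, and the overall induction closes.
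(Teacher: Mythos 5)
Your proposal follows essentially the paper's route: the paper packages your generalized induction hypothesis (quantifying over $P$ and $Q$ and carrying validity of the auxiliary methods) as a ``backward-convertibility'' predicate, proves it by structural induction on $C$, and discharges the interesting cases exactly as you describe---\lemref{lem:inhale-translation-exhale} to turn each auxiliary method into the premise triples, then \ruleref{Par3} or \ruleref{While}, then \lemref{lem:exhale-havoc-inhale} on the main exhale-havoc-inhale translation to lift to the given $P$ and $Q$; the easy constructors are handled by inversion plus the matching CSL rule, as in your sketch.

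The one step you should adjust is your treatment of the residual $D$ in the dual lifting lemma. The frame rule \ruleref{Frame2} adds the frame to \emph{both} the pre- and the postcondition, so it cannot be used to ``frame away'' $D$ from the postcondition of $\CSL{P'}{C}{A}$ when $A \models D * Q'$; as stated, that step does not go through. The paper avoids the problem by formulating the lemma so that the residual is simply kept: from $\axiomSem{P}{\vinhale{A}; \fst{\translate{C}}; \vexhale{B}}{Q}$ it concludes $\CSL{P * A}{C}{B * Q}$, so inside the lemma only the inversions of \axNameFull{Seq}, \axNameFull{Inhale}, \axNameFull{Exhale}, the convertibility hypothesis, and the consequence rule \ruleref{Cons2} are needed; the frame rule is reserved for \lemref{lem:exhale-havoc-inhale}, where the havocked variables justify its side condition. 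With that reformulation the rest of your argument matches the paper's proof.
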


\newcommand{\isValid}[1]{\mathit{valid}(#1)}
\newcommand{\convertible}[1]{\mathit{convertible(#1)}}

To prove this theorem, we show that the translation of every front-end statement $C$ is \emph{backward-convertible} (or \emph{convertible} in short),
which we write as $\convertible{C}$.
Intuitively, this means that if the translation of the front-end statement into \ViperInst{} is valid (including all auxiliary \ViperInst{} methods)
then we can convert the axiomatic semantics triple $\CSL{P}{\fst{\translate{C}}}{Q}$ into a front-end triple $\CSL{P}{C}{Q}$.
We formally express this property as follows:
$$\convertible{C} \triangleq \left(
    \forall P, Q \ldotp
    ((\forall C_v \in \snd{\translate{C}} \ldotp \mathit{valid}_{Ax}(C_v)) \land
    \axiomSem{P}{\fst{\translate{C}}}{Q}) \Rightarrow
    \CSL{P}{C}{Q}
\right)
$$
This convertibility property combined with the following lemma allows us to prove \thmref{thm:fe-soundness}:
\begin{lemma}[Inhale-translation-exhale pattern]
    \label{lem:inhale-translation-exhale}
    If (1) $\convertible{C}$ holds,
    (2) all auxiliary methods from $\snd{\translate{C}}$ are valid \wrt{} the axiomatic semantics,
    and (3) $\axiomSem{P}{ \vinhale{A}; \fst{\translate{C}}; \vexhale{B} }{ Q }$ holds,
    then $\CSL{P * A}{C}{B * Q}$ holds.
\end{lemma}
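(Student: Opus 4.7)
The plan is to exploit the crucial design feature of the axiomatic semantics mentioned in \secref{subsec:ax-semantics-key}: it contains exactly one rule per statement, so each rule can be inverted to recover the premises from the conclusion. This makes it straightforward to decompose the triple in assumption (3).

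First, I would invert the rule \axName{Seq} twice to split assumption (3) into three triples for some intermediate assertions $R_1, R_2$:
\begin{align*}
&\axiomSem{P}{\vinhale{A}}{R_1}, &&\axiomSem{R_1}{\fst{\translate{C}}}{R_2}, &&\axiomSem{R_2}{\vexhale{B}}{Q}.
\end{align*}
Next, I would invert the atomic rules. Inverting \axName{Inhale} on the first triple yields $R_1 = P * A$ (and incidentally gives me $\selfFraming{P}$ and $\framedBy{A}{P}$, which I do not need here). Inverting \axName{Exhale} on the third triple yields the entailment $R_2 \models Q * B$ (and $\selfFraming{R_2}$, $\selfFraming{Q}$). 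Substituting $R_1 = P * A$ leaves me with the middle triple $\axiomSem{P * A}{\fst{\translate{C}}}{R_2}$.

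Now I would discharge the front-end triple. By assumption (1), $\convertible{C}$ holds; combined with assumption (2) (validity of all auxiliary methods in $\snd{\translate{C}}$), applying the definition of convertibility to the middle triple $\axiomSem{P * A}{\fst{\translate{C}}}{R_2}$ yields $\CSL{P * A}{C}{R_2}$. Finally, applying the CSL consequence rule \ruleref{Cons2} with the entailment $R_2 \models Q * B$ (using commutativity of $*$ to rewrite as $B * Q$) yields the desired $\CSL{P * A}{C}{B * Q}$.

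I do not expect a serious obstacle here; the proof is essentially a two-line chain of inversion, one appeal to convertibility, and one application of consequence. The only subtle point is confirming that inverting \axName{Seq}, \axName{Inhale}, and \axName{Exhale} really is justified in the Isabelle formalization, but this follows directly from the one-rule-per-statement design emphasized in \secref{subsec:ax-semantics-key}. Everything else, including the handling of self-framing side conditions introduced by the inversions, is either directly available or irrelevant to the conclusion (since CSL's consequence rule requires only entailment, not framing).
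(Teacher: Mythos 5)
Your proposal is correct and follows essentially the same route as the paper's proof: invert \axNameFull{Seq}, \axNameFull{Inhale}, and \axNameFull{Exhale} to obtain $\axiomSem{P * A}{\fst{\translate{C}}}{R}$ and $R \models B * Q$, then apply $\convertible{C}$ (with assumption (2)) and conclude with the consequence rule \ruleref{Cons2}. The only difference is presentational (naming both intermediate assertions explicitly), so there is nothing to add.
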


\begin{proof}
    By inverting the rules \axNameFull{Seq}, \axNameFull{Inhale}, and \axNameFull{Exhale},
    we get the existence of $R$ such that
    (a) $\axiomSem{P * A}{ \fst{\translate{C}} }{ R }$ holds
    and (b) $R \models B * Q$.
    By applying $\convertible{C}$, and from (2) and (a), we get $\CSL{P * A}{C}{R}$.
    We conclude by combining (b) with the rule \ruleref{Cons2}.
\end{proof}

The proof of this lemma is straightforward thanks to \theIVL{}'s \emph{axiomatic} semantics. Relating CSL to an operational IVL semantics would require substantially more effort to re-prove standard reasoning principles, which we prove once and for all in the equivalence proof of the two IVL semantics.

We now need to prove $\convertible{C}$ for all $C$, which we do by structural induction.
The inductive cases for most statements are straightforward;
the interesting cases are allocation, deallocation, parallel compositions, and while loops.
As explained above, the main translation of those statements follows the same exhale-havoc-inhale pattern, which we have already seen in \secref{subsec:ax-semantics-key}, and prove below:
\begin{lemma}[Exhale-havoc-inhale]
    \label{lem:exhale-havoc-inhale}
	Let $P$ and $Q$ be self-framing assertions.\footnote{This condition is trivially true for standard SLs.}
  Assume that\\
  $\axiomSem{A}{\vexhale{P}; \vhavoc{x_1}; \ldots; \vhavoc{x_n}; \vinhale{Q}}{B}$ holds,
  where $\{ x_1, \ldots, x_n \} = \modvars{C}$.
  If $\anySL{P}{C}{Q}$ holds, and if
  $\mathcal{L}$ has a frame rule and a consequence rule, then
  $\anySL{A}{C}{B}$ holds.
\end{lemma}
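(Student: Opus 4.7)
The plan is to invert the axiomatic derivation and then stitch together the two hypothesized rules of $\mathcal{L}$. First, I would apply inversions of \axName{Seq}, \axName{Exhale}, \axName{Havoc}, and \axName{Inhale} to the given triple $\axiomSem{A}{\vexhale{P}; \vhavoc{x_1}; \ldots; \vhavoc{x_n}; \vinhale{Q}}{B}$, walking left to right through the composition. This yields an intermediate assertion $R_1$ after the exhale with $A \models R_1 * P$ and $R_1$ self-framing, an assertion $R_2$ after the $n$ havocs that equals (semantically) the existential closure $\exists x_1, \ldots, x_n \ldotp R_1$, and finally $B = R_2 * Q$ after the inhale.

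The crux is that $R_2$ is a valid frame for $C$. Since $\{x_1, \ldots, x_n\} = \modvars{C}$ by hypothesis, existentially closing $R_1$ over exactly those variables ensures $\freevars{R_2} \cap \modvars{C} = \emptyset$; moreover $R_2$ inherits self-framing from $R_1$. Applying $\mathcal{L}$'s frame rule to the assumption $\anySL{P}{C}{Q}$ with frame $R_2$ therefore produces $\anySL{R_2 * P}{C}{R_2 * Q}$.

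Finally, I apply $\mathcal{L}$'s consequence rule. The precondition entailment $A \models R_2 * P$ follows by chaining $A \models R_1 * P$ (from the exhale inversion) with $R_1 \models R_2$ (any state satisfying $R_1$ trivially satisfies its own existential closure), together with monotonicity of separating conjunction. The postcondition $R_2 * Q$ is literally $B$. This delivers $\anySL{A}{C}{B}$, as required.

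The main obstacle I anticipate is side-condition bookkeeping: each axiomatic rule in \figref{fig:ax-semantics} carries self-framing obligations that must be threaded through to match the prerequisites of $\mathcal{L}$'s frame rule, and the semantic formulation of \vhavocNa{} (which quantifies over values of a given type) must be shown to imply whatever syntactic side-condition on free variables the generic frame rule of $\mathcal{L}$ demands. Because the lemma is parametric in $\mathcal{L}$, the argument cannot appeal to features beyond frame and consequence, so any discrepancy between $\mathcal{L}$'s notion of independence from a variable and the semantic havoc closure must be bridged carefully at this interface.
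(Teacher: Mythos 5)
Your proposal is correct and follows essentially the same route as the paper's proof: invert \axNameFull{Seq}, \axNameFull{Exhale}, \axNameFull{Havoc}, and \axNameFull{Inhale} to extract the frame $F = (\exists x_1, \ldots, x_n \ldotp R)$ (your $R_2$), apply $\mathcal{L}$'s frame rule using that the existential closure over exactly $\modvars{C}$ discharges the variable side condition, and finish with the consequence rule via $A \models P * F$ (chaining the exhale entailment with $R \models F$) and $B = Q * F$. The only difference is cosmetic (you invert left-to-right, the paper splits off the inhale first), and your noted concerns about self-framing and the variable side condition are exactly the points the paper handles in its footnote.
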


\begin{proof}
    By inverting the rule \axNameFull{Seq}, we obtain $F$ such that
  (a) $\axiomSem{F}{\vinhale{Q}}{B}$ and
  (b) $\axiomSem{A}{\vexhale{P}; \vhavoc{x_1}; \ldots; \vhavoc{x_n}}{F}$ hold.
  From (b), by inverting the rules \axNameFull{Seq} and \axNameFull{Havoc}, we obtain an assertion $R$
  such that
  (c) $\axiomSem{A}{\vexhale{P}}{R}$ holds,
  (d) $\freevars{F} \cap \{ x_1, \ldots, x_n \} = \emptyset$, and
  (e) $R \models F$.%
  \footnote{More precisely,
  we obtain $F = (\exists x_1, \ldots, x_n \ldotp R)$, from which (d) and (e) follow.}
  By applying the frame rule with $F$ and $\anySL{P}{C}{Q}$, where the side condition is justified by (d), we get $\anySL{P * F}{C}{Q * F}$.
  Finally, we obtain $B = F * Q$ from (a) (by inverting the rule \axNameFull{Inhale}),
  and $A \models P * F$ from (c) (by inverting the rule \axNameFull{Exhale}) and (e);
  applying the consequence rule yields $\anySL{A}{C}{B}$.
\end{proof}

This proof shows that, in this pattern, the role of the \vexhaleNa{} statement, followed by a sequence of \vhavocNa{} statements, is to compute (implicity) the suitable frame for the front-end statement.
The \vinhaleNa{} statement afterwards then adds the postcondition of the front-end statement to the frame.

$\convertible{\fefree{r}}$
and $\convertible{\code{r} := \fenew{\code{e}}}$ follow directly from the lemma above,
by observing that $\vinhale{\top}$ and $\vexhale{\top}$ are equivalent to $\vskipp$ (and so omitted when encoding).

To prove $\convertible{\fewhile{b}{C}}$ (assuming $C$ is convertible),
we first apply \lemref{lem:inhale-translation-exhale} on the auxiliary method $\vinhale{I \land b}; \translate{C}.1; \vexhale{I}$
to get $\CSL{I \land b}{C}{I}$.
We then apply the rule \ruleref{While} to get $\CSL{I}{\fewhile{b}{C}}{I \land \lnot b}$.
Finally, we conclude by applying \lemref{lem:exhale-havoc-inhale} on the main translation $(\vexhale{I}; \vhavoc{\modvars{C}}; \vinhale{I \land \lnot b})$.

The proof of $\convertible{C_1 || C_2}$ proceeds similarly,
by first applying \lemref{lem:inhale-translation-exhale} on the two auxiliary methods (corresponding to the two premises of the rule \ruleref{Par3}), then applying the rule \ruleref{Par3},
and concluding by applying \lemref{lem:exhale-havoc-inhale}.
This concludes the proof of $\convertible{C}$ for all $C$, and thus the proof of \thmref{thm:fe-soundness}.

\paragraph{Summary}
We have demonstrated how the axiomatic semantics from \secref{subsec:ax-semantics} helps us solve Challenge 3, by allowing us to prove general lemmas about patterns that are common in front-end translations in a simple and straightforward manner, and to prove the soundness of a concrete front-end translation for a parallel programming language with multiple features not present in the IVL (\eg loops, dynamic memory allocation and deallocation).
\section{Related Work}
\label{sec:related-work}

\paragraph{Semantics of SL-based IVLs.}
There are two recent formalizations~\cite{THSem,GradualViper} of  subsets of Viper~\cite{Viper}.
However, each of them exposes implementation details of a Viper back-end, which does not allow the semantics to be connected to diverse back-ends and also not easily to front-ends.
In particular, \citet{THSem} use a total heap representation reflecting the Viper VCG back-end that translates to Boogie (as discussed in~\secref{sec:verif-cond-gener}), and \citet{GradualViper} reflect Viper's symbolic execution back-end.

GIL~\cite{GillianII}, which is the intermediate language of Gillian~\cite{GillianI,GillianII}, is parametric in its (1)~state model, which must be provided as a PCM (supporting SL but not IDF states in contrast to \theIVL{}), (2) \emph{memory actions} operating on the state model, and (3) \emph{core predicates} describing atomic assertions on the memory such as a SL points-to assertion.
For each state instantiation, tool developers targeting GIL must specify \emph{produce} and \emph{consume} actions for each core predicate, which correspond to \vinhaleNa{} and \vexhaleNa{} operations in \theIVL{}.
Together with instantiated parameters, \citet{GillianII} provide an operational semantics for the symbolic execution of GIL\@.
Since the instantiated state effectively reflects the symbolic state on which the symbolic execution tool operates, a GIL instantiation essentially represents the back-end semantics.
This is in contrast to our \theIVL{}, which allows abstracting over multiple back-ends.

\citet{InliningViper} define the semantics of a parametric verification language similar to \theIVL{} for the purpose of showing formal results on method call inlining in automated SL verifiers.
Their semantics is meant to capture IVL \emph{back-ends} with their heuristics.
That is, an instantiation reflects a \emph{single} back-end.
As a result, in contrast to \theIVL{}, their semantics has no angelic nondeterminism.
Moreover, their notion of separation algebra to represent states does not support IDF.



\paragraph{Proofs connecting a front-end with an IVL}
\citet{SummersM20} and \citet{WolfSM22} reason about the correctness of translations into a SL-based IVL by providing proof sketches for mapping a correct Viper program to a proof for Hoare triples in the RSL weak memory logic~\cite{VafeiadisN13} and the TaDa logic~\cite{PintoDG14}, respectively.
However, the reasoning is done via proof sketches on paper, which explore only the high-level reasoning principles and thus avoid many of the complexities involved in a fully formal proof.
Neither of these works formally reasons about the underlying Viper semantics; they describe the behavior of Viper encodings informally.

\citet{GillianTR} briefly describe a parametric soundness framework for GIL (the intermediate language of Gillian~\cite{GillianI,GillianII}).
They show that if certain conditions hold on the instantiations of the GIL parameters, then the resulting symbolic execution is sound \wrt{} a concretization function on symbolic states.
However, they do not provide an IVL semantics like \theIVL{} that abstracts uniformly over multiple back-ends.
Additionally, since GIL does not support concurrency~\cite{GillianI,GillianII},
their soundness framework cannot reason about the encoding of front-end languages such as \parImp{} described in \autoref{sec:frontends}.
\citet{GillianSoundnessECOOP} present a formal compositional symbolic execution engine inspired by Gillian. In contrast to our work, they focus on supporting both over-approximating and under-approximating reasoning, and do not model an IVL, but only apply their framework to a simple front-end language with a fixed memory model.

There is also work proving the soundness of front-end translations to IVLs not based on SL~\cite{VogelsJP09,Backes2011,Herms13,Fortin13a,THSem}.
However, in contrast to our setting, the corresponding translations do not reflect rules in a front-end program logic.
As a result, the soundness proofs work naturally at the level of an operational semantics for the front-end and IVL\@.
Examples include translations from the Dminor data processing language to the Bemol IVL~\cite{Backes2011}, from C to the WhyCert IVL (inspired by the Why3 IVL)~\cite{Herms13}, and from Viper to Boogie~\cite{THSem} (in the case of the Viper-to-Boogie translation, Viper is the front-end and Boogie is the target IVL).

\paragraph{Proofs connecting an IVL with a back-end.}
%
\citet{THSem} show the soundness of the Viper back-end that translates to Boogie.
In our work, we show that their back-end specific semantics respects our more generic version (\secref{sec:verif-cond-gener}).
The work most closely related to the symbolic execution back-end presented in \autoref{sec:symbolic-execution} is \citet{GradualViper}'s formalization of a variant of Viper's symbolic execution back-end targeted at gradual verification.
Due to their focus on gradual verification, they only target a simplified model of Viper that (unlike our symbolic execution) does not support fractional permission.
As a consequence, they can use a simpler implementation that does not rely on continuation passing style and they can ignore some of the complexities described in \autoref{sec:symbolic-execution} such as state consolidation.
Also they formalize the symbolic execution via a derivation tree, while we implement it as an \isabelle{} function.
\citet{FeatherweightVerifast} prove a formalization of VeriFast's symbolic execution sound.
Compared to our work, they do not have a semantics that captures different verification algorithms, or supports IDF or fractional permissions.

There is also work on non SL-based IVL back-end proofs. These back-ends typically have simple state models and use different algorithms compared to SL-based back-ends.
For example, \citet{ParthasarathyMuellerSummers21} generate soundness proofs for Boogie's VCG, and \citet{VogelsJP10} prove a VCG for a similar IVL sound once and for all.
\citet{Garchery21} and \citet{Cohen24} validate certain logical transformations performed in the Why3 IVL verifier.

\paragraph{Angelic non-determinism}
Angelic non-determinism \cite{FloydAngNondet} has been widely used from encoding partial programs~\cite{AngelicPartialPrograms}, to representing interaction between code written in multiple languages~\cite{DimSum, Melocoton}, to encoding specifications~\cite{FloydAngNondet, CCR}.
However, to the best of our knowledge, our work is the first to use angelic non-determinism to abstract over different verification algorithms.
\citet{FeatherweightVerifast} and \citet{CCR} both also use angelism for \vexhaleNa{}, but do not abstract over or formally connect with diverse back-end algorithms, as we do.
Instead, \citet{FeatherweightVerifast} use angelism to represent a symbolic execution algorithm, while \citet{CCR} use angelism to encode the transfer of resources in a refinement calculus.

\paragraph{Implicit dynamic frames (IDF)}
IDF was originally presented with a fixed resource model (\ie{} full ownership to a heap location) and where the heap is represented as a \emph{total} mapping from heap locations to values~\cite{SmansJP12}.
\citet{ParkinsonSummers12} formally showed the relationship between IDF and SL by defining a logic based on \emph{total} heaps and separate permission masks that captures both.
They also consider fixed resource models of IDF and SL (\ie{} fractional ownership to a heap location~\cite{Boyland03}).
Our work generalizes the notion of a separation algebra~\cite{Calcagno2007,Dockins2009} to capture arbitrary resource models for IDF \emph{and} SL in the same framework.
In particular, the algebra does not fix a particular state representation.
This enables, for instance, a \emph{partial} heap instantiation for IDF that we use to formalize Viper's state model (\secref{sec:viper-inst}).
SteelCore~\cite{SwamyRFMAM20} is a framework
with an extensible CSL to reason about concurrent F*~\cite{FStar} programs.
The extensibility of the framework is in particular demonstrated by allowing IDF-style preconditions of the restricted form $P * b$ (compared to the more general IDF assertions supported in our work), where $P$ is an SL assertion, and $b$ is a heap-dependent boolean expression framed by $P$ (and similarly for postconditions).

\paragraph{Other approaches}
In this paper, we showed how one can formally establish the soundness of translational SL verifiers, but there are also other approaches to building automated SL verifiers and establishing their soundness.
Steel~\cite{Steel} is an SL-based proof-oriented programming language in F*.
Steel programs are automatically proved correct using a type checker that is proved sound against SteelCore; the type checker uses an SMT solver to discharge proof obligations.

\michael{\citet{Katamaran} (building on the ideas of \citet{FeatherweightVerifast}) show how to build a verified symbolic execution based on a specification monad that allows expressing angelic and demonic non-determinism and assume and assert statements. They \peter{formalize (in Coq)} two (structurally identical) versions of the symbolic execution algorithm: a deeply embedded version that allows execution and a shallow embedded version \peter{to prove soundness of the former.} However, both versions represent the same algorithm; they do not consider different back-ends (like the verification condition generation back-end in \autoref{sec:verif-cond-gener}).}

\citet{RefinedC} propose an approach to building sound verifiers that requires writing the verifier in a domain specific language called Lithium.
Verifiers in Lithium can be automatically executed inside the Coq proof assistant and produce a foundational proof of correctness.
Lithium-based verifiers are not translational, but work directly on the source-language program.
\section{Conclusion}
\label{sec:conclusion}

We have presented a formal framework for reasoning about the soundness of translational separation logic verifiers.
We have defined an abstract IVL, whose state model can be instantiated with any \IDFAlgebra{}.
An operational and an equivalent axiomatic semantics allow one to connect the IVL to back-ends and front-ends, resp.
Crucially, the semantics leverage dual non-determinism to capture different verification algorithms implemented by different back-end verifiers.
We have illustrated the usefulness of our formal framework by instantiating it with elements of Viper,
connecting it to two Viper back-ends, and using it to prove soundness of a front-end translation for an IDF-based concurrent separation logic.
The main direction for future work is to use our formal framework to model additional IVLs and prove soundness of complex translational verifiers.

\begin{acks}
  We thank Ellen Arlt and Hongyi Ling for their useful feedback on the framework presented in this paper.
  This work was partially funded by the
  \grantsponsor{SNSF}{Swiss National Science Foundation (SNSF)}{http://dx.doi.org/10.13039/501100001711}
  under Grant No.~\grantnum{SNSF}{197065}.
\end{acks}

\section*{Data availability statement}
All technical results presented in this paper have been formalized and proven in Isabelle/HOL, and our formalization is publicly available in our artifact~\cite{artifact}. The development version is available at \url{https://github.com/viperproject/viper-roots}.

{
\interlinepenalty=10000
\bibliography{bib}
}

\appendix

\ifextended
\clearpage
\section{Full Definition of Symbolic Execution}
\label{sec:sexec-full}

The main functions of our symbolic execution are the $\sexec$, $\sproduce$, $\sconsume$, and $\sexp$ functions, whose definition is given below.

$\sexec~\SState~\stmt~\SCond$ symbolically executes the statement $\stmt$ in the symbolic state $\SState$:
{\small
    \begin{align*}
    \sexec~\SState~\stmt~\SCond &\eqdef \left\{
    \begin{aligned}
      &\sproduce~\SState~\assertion~\SCond & \text{ if } \stmt = \Inhale{\assertion}\\
      &\sconsume~\SState~\assertion~(\Lam \SState. \sstablize~\SState~\SCond) & \text{ if } \stmt = \Exhale{\assertion}\\
      &\sexp~\SState~\expr~(\Lam \SState~\symexpr. \sexec~\symaddcond{\SState}{\symexpr}~\stmt_1~\SCond & \text{ if } \stmt = (\SynIf{\expr}{\stmt_1}{\stmt_2})
      \\ &\quad \wedge \sexec~\symaddcond{\SState}{\neg\symexpr}~\stmt_2~\SCond)
     \\
      &\sexec~\SState~\stmt_1~(\Lam \SState.\sexec~\SState~\stmt_2~\SCond) & \text{ if } \stmt = {\stmt_1} \cseq {\stmt_2}\\
      &x \in \SState.\symstore \wedge \sexp~\SState~\expr~(\Lam \SState~\symexpr.\SCond~\SState\{\symstore :=\SState.\symstore[x\mapsto\symexpr]\}) & \text{ if } \stmt = \vassign{x}{\expr}\\
      &x \in \SState.\symstore \wedge \SCond~\SState\{\symstore :=\SState.\symstore[x\mapsto\symfresh]\}) & \text{ if } \stmt = \vhavoc{x}\\
      &\sexp~\SState~\expr_r~(\Lam \SState~\symexpr_r.
       \sexp~\SState~\expr_v~(\Lam \SState~\symexpr_v.
        \symextract{\SState}{\symexpr_r}{\fieldname}{1}~(\Lam \SState~\chunk.
              & \text{ if } \stmt = \vfieldassign{\expr_r}{\fieldname}{\expr_v} \\
        &\quad\sstablize~\SState~(\Lam \SState.
        \symaddchunk{\SState}{\chunk\{\chunkval := \symexpr_v\}}~\SCond))))
    \end{aligned}\right.
  \end{align*}
}

$\sproduce~\SState~\assertion~\SCond$ inhales the assertion $\assertion$ in the symbolic state $\SState$.
{\small
    \begin{align*}
    \sproduce~\SState~\assertion~\SCond &\eqdef \left\{
    \begin{aligned}
      &\sexp~\SState~\expr~(\Lam \SState~\symexpr. \SCond~\symaddcond{\SState}{\symexpr}) & \text{ if } \assertion = \expr\\
      &\sexp~\SState~\expr_r~(\Lam\SState~\symexpr_r.
        \sexp~\SState~\expr_p~(\Lam\SState~\symexpr_p. & \text{ if } \assertion = \accmath{\expr_r}{\fieldname}{\expr_p}\\
      & \quad \symaddchunk{\SState}{\{\symexpr_r, \fieldname, \symexpr_p, \symfresh\}}~\SCond))\\
      &\sexp~\SState~\expr_r~(\Lam\SState~\symexpr_r.
        \mathsf{let}~\symexpr_p \mathop{:=} \symfresh~\mathsf{in}
        & \text{ if } \assertion = \accmath{\expr_r}{\fieldname}{\wildcardmath}\\
      & \quad \symaddchunk{\symaddcond{\SState}{0 < \symexpr_p}}{\{\symexpr_r, \fieldname, \symexpr_p, \symfresh\}}~\SCond))\\
      &\sproduce~\SState~\assertion_1~(\Lam\SState.
        \sproduce~\SState~\assertion_2~\SCond) & \text{ if } \assertion = \assertion_1 \ast \assertion_2\\
      &\sexp~\SState~\expr~(\Lam \SState~\symexpr. \sproduce~\symaddcond{\SState}{\symexpr}~\assertion'~\SCond & \text{ if } \assertion = \expr \Rightarrow \assertion'
      \\ &\quad \wedge \SCond~\symaddcond{\SState}{\neg\symexpr}) \\
      &\sexp~\SState~\expr~(\Lam \SState~\symexpr. \sproduce~\symaddcond{\SState}{\symexpr}~\assertion_1~\SCond & \text{ if } \assertion = (\expr \mathop{?} \assertion_1 \mathop{:} \assertion_2)
      \\ &\quad \wedge \sproduce~\symaddcond{\SState}{\neg\symexpr}~\assertion_2~\SCond)
    \end{aligned}\right.
  \end{align*}
}

$\sconsume~\SState~\assertion~\SCond$ exhales the assertion $\assertion$ in the symbolic state $\SState$.
{\small
    \begin{align*}
    \sconsume~\SState~\assertion~\SCond &\eqdef \left\{
    \begin{aligned}
      &\sexp~\SState~\expr~(\Lam\SState~\symexpr. (\symproves{\SState}{\symexpr} )\wedge \SCond~\SState) & \text{ if } \assertion = \expr\\
      &\sexp~\SState~\expr_r~(\Lam\SState~\symexpr_r.
        \sexp~\SState~\expr_p~(\Lam\SState~\symexpr_p. \symextract{\SState}{\symexpr_r}{\fieldname}{\symexpr_p}~ & \text{ if } \assertion = \accmath{\expr_r}{\fieldname}{\expr_p}\\
      & \quad        (\Lam\SState~\chunk. \symaddchunk{\SState}{\chunk\{\chunkperm :=\chunk.\chunkperm - \symexpr_p\}}~\SCond)))\\
      &\sexp~\SState~\expr_r~(\Lam\SState~\symexpr_r.
        \symextract{\SState}{\symexpr_r}{\fieldname}{\wildcardmath}~ & \text{ if } \assertion = \accmath{\expr_r}{\fieldname}{\wildcardmath}\\
      & \quad        (\Lam\SState~\chunk. \symaddchunk{\SState}{\chunk\{\chunkperm :=\chunk.\chunkperm / 2\}}~\SCond))\\
      &\sconsume~\SState~\assertion_1~(\Lam \SState.
        \sconsume~\SState~\assertion_2~\SCond) & \text{ if } \assertion = \assertion_1 \ast \assertion_2\\
      &\sexp~\SState~\expr~(\Lam \SState~\symexpr. \sconsume~\symaddcond{\SState}{\symexpr}~\assertion'~\SCond & \text{ if } \assertion = \expr \Rightarrow \assertion'
      \\ &\quad \wedge \SCond~\symaddcond{\SState}{\neg\symexpr}) \\
      &\sexp~\SState~\expr~(\Lam \SState~\symexpr. \sconsume~\symaddcond{\SState}{\symexpr}~\assertion_1~\SCond & \text{ if } \assertion = (\expr \mathop{?} \assertion_1 \mathop{:} \assertion_2)
      \\ &\quad \wedge \sconsume~\symaddcond{\SState}{\neg\symexpr}~\assertion_2~\SCond)
    \end{aligned}\right.
  \end{align*}
  }

$\sexp~\SState~\expr~\SCond$ symbolically evaluates the expression $\expr$ in the symbolic state $\SState$. (This definition has been slightly simplified by removing the treatment of lazy binary operators like \verb|&&| or \verb/||/.)
{\small
    \begin{align*}
    \sexp~\SState~\expr~\SCond &\eqdef \left\{
    \begin{aligned}
      &\SCond~\SState~\literal & \text{ if } \expr = \literal\\
      &x\in\SState.\symstore \wedge \SCond~\SState~\SState.\symstore[x] & \text{ if } \expr = x\\
      &\sexp~\SState~\expr'~(\Lam \SState~\symexpr. \SCond~\SState~(\unoperation\symexpr)) & \text{ if } \expr = \unoperation\expr'\\
      &\sexp~\SState~\expr_1~(\Lam \SState~\symexpr_1.
        \sexp~\SState~\expr_2~(\Lam \SState~\symexpr_2. \SCond~\SState~(\symexpr_1\binoperation\symexpr_2))) & \text{ if } \expr = \expr_1\binoperation\expr_2\\
      &\sexp~\SState~\expr'~(\Lam \SState~\symexpr. \sexp~\symaddcond{\SState}{\symexpr}~\expr_1~\SCond & \text{ if } \expr = (\expr' \mathop{?} \expr_1 \mathop{:} \expr_2)
      \\ &\quad \wedge \sexp~\symaddcond{\SState}{\neg\symexpr}~\expr_2~\SCond) \\
      &\sexp~\SState~\expr_r~(\Lam\SState~\symexpr_r.
\symextract{\SState}{\symexpr_r}{\fieldname}{0}~(\Lam\SState~\chunk. & \text{ if } \expr = \expr_r.\fieldname\\
      & \quad \symaddchunk{\SState}{\chunk}~(\Lam\SState.\SCond~\SState~\chunk.\chunkval)))
    \end{aligned}\right.
  \end{align*}
}

\fi

\end{document}